\documentclass[a4paper]{amsart}

\usepackage{amsmath}
\usepackage{amssymb}
\usepackage{amsmath}
\usepackage{mathtools} 

\theoremstyle{definition}
\newtheorem{definition}{Definition}

\theoremstyle{theorem}
\newtheorem{proposition}{Proposition}
\newtheorem{lemma}{Lemma}
\newtheorem{theorem}{Theorem}

\usepackage{tikz-cd}
\usepackage{hyperref}

\newcommand{\parfunc}{\rightharpoonup}
\newcommand{\natexp}{\overset{\bullet}{\looparrowright}}
\newcommand{\bars}[1]{\left\lvert #1 \right\rvert}
\newcommand{\dbars}[1]{\left\| #1 \right\|}
\newcommand{\setcomp}[2]{\left\{ \, #1 \; \middle| \; #2 \, \right\}} 
\newcommand{\myeq}{\stackrel{\mathclap{\tiny\mbox{def}}}{=}}
\newcommand{\mb}[1]{\mathbf{#1}}
\newcommand{\expo}{\looparrowright}

\newcommand{\ibox}[1]{\mathsf{box\;}#1}

\title{On the Semantics of Intensionality}
\thanks{This work was supported by the EPSRC (award reference 1354534).
The final publication is available at Springer via
\url{http://dx.doi.org/10.1007/978-3-662-54458-7_32}.}

\author{G. A. Kavvos}

\address{Department of Computer Science, University of Oxford
 Wolfson Building, Parks Road, Oxford OX1 3QD, United Kingdom}
\email{alex.kavvos@cs.ox.ac.uk}

\begin{document}

\maketitle

\begin{abstract}

In this paper we propose a categorical theory of intensionality.
We first revisit the notion of intensionality, and discuss its
relevance to logic and computer science. It turns out that
1-category theory is not the most appropriate vehicle for studying
the interplay of extension and intension. We are thus led to
consider the P-categories of \v{C}ubri\'{c}, Dybjer and Scott,
which are categories only up to a partial equivalence relation
(PER). In this setting, we introduce a new P-categorical
construct, that of exposures. Exposures are very nearly functors,
except that they do not preserve the PERs of the P-category.
Inspired by the categorical semantics of modal logic, we begin to
develop their theory. Our leading examples demonstrate that an
exposure is an abstraction of well-behaved intensional devices,
such as G\"odel numberings. The outcome is a unifying framework in
which classic results of Kleene, G\"odel, Tarski and Rice find
concise, clear formulations, and where each logical device or
assumption involved in their proofs can be expressed in the same
algebraic manner.

\end{abstract}

\section{Introduction: Intensionality \& Intensional Recursion}

This paper proposes a new theory of \emph{intensionality}.
Intensionality is a notion that dates as early as Frege's
philosophical distinction between \emph{sense} and
\emph{denotation}, see for example \cite{Fitting2015}. In the most
mathematically general sense, to be `intensional' is to somehow
operate at a level finer than some predetermined `extensional'
equality. Whereas in mainstream mathematics intensionality is
merely a nuisance, it is omnipresent in computer science, where
the objects of study are distinct programs and processes
describing (often identical) abstract mathematical values.  At one
end of the spectrum, programs are extensionally equal if they are
\emph{observationally equivalent}, i.e.  interchangeable in any
context. At the other extreme, computer viruses often make
decisions simply on patterns of object code they encounter,
disregarding the actual function of what they are infecting; one
could say they operate up to syntactic identity.

\subsection{Intensionality as a Logical Construct}
  \label{sec:intlog}

We are interested in devising a categorical setting in which
programs can be viewed in two ways simultaneously, either as
\emph{black boxes}---i.e.  extensionally, whatever we define that
to mean, but also as \emph{white boxes}---i.e.  intensionally,
which should amount to being able to `look inside' a construction
and examine its internal workings.

There are many reasons for pursuing this avenue. The main new
construct we will introduce will be an abstraction of the notion
of \emph{G\"odel numbering}. The immediate achievement of this
paper is a categorical language in which we can state many classic
theorems from logic and computability that depend on the
interplay between extension and intension. This unifying language
encompasses all such `diagonal constructions' in a way that makes
the ingredients involved in each argument clear. As such, we
regard this as an improvement on the classic paper of Lawvere
\cite{Lawvere2006}.

A more medium-term goal is the quest to prove a logical foundation
to \emph{computational reflection}, in the sense of Brian Cantwell
Smith \cite{Smith1984}. A reflective program is always able to
obtain a complete description of its source code and current
state; this allows it to make decisions depending on both its
syntax and runtime behaviour. This strand of research quickly ran
into impossibility results that demonstrate that reflective
features are logically ill-behaved: see e.g. \cite{Barendregt1991}
for reflection in untyped $\lambda$-calculus, or \cite{Wand1998}
for a more involved example involving the LISP $\textbf{fexpr}$
construct. Our viewpoint allows us to talk about the notion of
\emph{intensional recursion}, which is more general than ordinary
extensional recursion, and seems to correspond to a well-behaved
form of reflection. We connect this to a classic result in
computability theory, namely Kleene's \emph{Second Recursion
Theorem (SRT)}.

Finally, a more long-term goal is to understand
\emph{non-functional computation}. In this context, non-functional
computation means something much more general than just computing
with side-effects: we are interested in general higher-order
computation acting `on syntax.' Approaches to such forms of
computation have hitherto been ad-hoc, see e.g. \cite[\S
6]{Longley2000}. We would like to provide a very general way to
add `intensional' features to a pure functional programming
language.

\subsection{Prospectus}

To begin, we introduce in \S\ref{sec:modp} a known connection
between the notion of intension and the necessity modality
from modal logic, and the use of \emph{modal types} in isolating
intension from extension. We argue that this connection cannot be
fully substantiated in 1-category theory. Hence, we introduce
\emph{P-categories} and explain their use in modelling
intensionality.

In \S\ref{sec:expo} we introduce a new P-categorical construct,
the \emph{exposure}. Exposures `turn intensions into
extensions' in a manner inspired by the modality-as-intension
interpretation. In \S\ref{sec:intrec} we use exposures to talk
about the notion of intensional recursion in terms of
\emph{intensional fixed points (IFPs)}. 

In \S\ref{sec:consistency} we use IFPs to prove abstract
analogues to \emph{Tarski's undefinability theorem} and G\"odel's
\emph{First Incompleteness Theorem}. In \S\ref{sec:arith} we
construct a P-category and an endoexposure that substantiate the
claim that the the abstract versions correspond to the usual
theorems.

We then ask the obvious question: where do IFPs come from? In
\S\ref{sec:lawvere} we generalize Lawvere's fixed-point theorem to
yield IFPs. Then, in \S\ref{sec:recthm}, we draw a parallel
between Kleene's \emph{First Recursion Theorem (FRT)} and
Lawvere's fixed point result, whereas we connect our IFP-yielding
result with Kleene's \emph{Second Recursion Theorem}. We
substantiate this claim in \S\ref{sec:asm} by constructing a
P-category and exposure based on realizability theory.

Finally, in \S\ref{sec:rice}  we provide further evidence for the
usefulness of our language by reproducing an abstract version of
\emph{Rice's theorem}, a classic result in computability theory.
We find that this is already substantiated in the P-category
constructed in \S\ref{sec:asm}.

\section{Modality-as-Intension}
  \label{sec:modp}

All the negative results regarding intensionality and
computational reflection have something in common: they invariably
apply to some construct that can \emph{turn extension into
intension}. For example, in \cite{Barendregt1991} a contradiction
is derived from the assumption that some term $Q$ satisfies $QM
=_{\beta} \ulcorner M \urcorner$ for any $M$, where
the RHS is a G\"odel number of $M$. The moral is that \emph{one
should not mix intension and extension}.

To separate the two, we will use \emph{modal types}, as first
suggested by Davies and Pfenning \cite{Davies2001a,Davies2001}. In
\emph{op. cit.} the authors use modal types to simulate two-level
$\lambda$-calculi. In passing, they interpret the modal type $\Box
A$ as the type of `intensions of type $A$' or `code of type $A$.'
The \textsf{T} axiom $\Box A \rightarrow A$ may then be read as an
\emph{interpreter} that maps code to values, whereas the
\textsf{4} axiom $\Box A \rightarrow \Box\Box A$ corresponds to
\emph{quoting}, but quoting that can only happen when the initial
value is already code, and not a `runtime,' live value.

Unfortunately, the available semantics do not corroborate this
interpretation. The categorical semantics of the \textsf{S4}
necessity modality---due to Bierman and de Paiva
\cite{Bierman2000a}---specify that $\Box$ is a \emph{monoidal
comonad} $(\Box, \epsilon, \delta)$ on a CCC. The problem is now
rather obvious, in that equality in the category \emph{is}
extensional equality: if $f = g$, then $\Box f = \Box g$. In modal
type theory, this amounts to saying that $\vdash M = N : A$
implies $\vdash \ibox{M} = \ibox{N} : \Box A$, which is not what
we mean by intensionality at all. By definition, `intension'
should not be preserved under equality, and in \cite{Davies2001a}
it is clearly stated that there should be no reductions under a
$\ibox{(-)}$ construct.

To salvage this \emph{modality-as-intension} interpretation, we
have to leave 1-category theory, and move to a framework where we
can separate \emph{extensional equality}---denoted $\sim$---and
\emph{intensional equality}---denoted $\approx$. We will thus use
the \emph{P-categories} of \v{C}ubri\'{c}, Dybjer and Scott
\cite{Cubric1998}. P-categories are only categories up to a family
of partial equivalence relations (PERs). In our setting, the PER
will specify extensional equality. All that remains is to devise a
construct that (a) behaves like a modality, so that intension and
extension stay separate, and (b) unpacks the non-extensional
features of an arrow. This will be the starting point of our
theory of \emph{exposures}.

\subsection{P-categories and Intensionality}

Suppose we have a model of computation or a programming language
whose programs are seen as computing functions, and suppose that
we are able to \emph{compose} programs in this language, so that
given programs $\texttt{P}$ (computing $f$) and $\texttt{Q}$
(computing $g$) there is a simple syntactic construction
$\texttt{Q}; \texttt{P}$ (computing $g \circ f$). In more elegant
cases, like the $\lambda$-calculus, composition will be
substitution of a term for a free variable.  But in most other
cases there will be unappealing overhead, involving e.g.
some horrible disjoint unions of sets of states. This syntactic
overhead almost always ensures that composition of programs
is not associative: $(\texttt{R};\texttt{Q});\texttt{P}$ is
\emph{not} syntactically identical to
$\texttt{R};(\texttt{Q};\texttt{P})$, even though they compute the
same function.

To model this we will use \emph{P-categories}, first introduced in
\cite{Cubric1998}. Generally denoted $\mathfrak{B}, \mathfrak{C},
\dots$, or even $(\mathfrak{B}, \sim)$, P-categories are
categories whose hom-sets are not sets, but \emph{P-sets}: a P-set
is a pair $A = (\bars{A}, \sim_A)$ of a set $\bars{A}$ and a
partial equivalence relation (PER)\footnote{That is, a symmetric
and transitive relation.} on $\bars{A}$. If $a \sim_A a'$, then
$a$ can be thought of as `equal' to $a'$. The lack of reflexivity
means that there may be some $a \in \bars{A}$ such that $a
\not\sim_A a$: these can be thought of as points which are
\emph{not well-defined}. A \emph{P-function} $f : A \rightarrow B$
between two P-sets $A = (\bars{A}, \sim_A)$ and $B = (\bars{B},
\sim_B)$ is a function $\bars{f} : \bars{A} \rightarrow \bars{B}$
that respects the PER: if $a \sim_A a'$ then $\bars{f}(a) \sim_B
\bars{f}(a')$.  We simply write $f(a)$ if $a \sim_A a$.

Thus, we take each hom-set $\mathfrak{C}(A, B)$ of a P-category
$\mathfrak{C}$ to be a P-set. We will only write $f : A
\rightarrow B$ if $f \sim_{\mathfrak{C}(A, B)} f$, i.e. $f$ is a
well-defined arrow. Arrows in a P-category are \emph{intensional
constructions}. Two arrows $f, g : A \rightarrow B$ will be
\emph{extensionally equal} if $f \sim_{\mathfrak{C}(A, B)} g$. The
axioms of category theory will then only hold up to the family of
PERs, i.e. ${\sim} = \{\sim_{\mathfrak{C}(A, B)}\}_{A, B \in
\mathfrak{C}}$. For example, it may be that $f \circ (g \circ h)
\neq (f \circ g) \circ h$, yet $f \circ (g \circ h) \sim (f \circ
g) \circ h$.

Hence, we regain all the standard equations of 1-categories, up to
PERs. Furthermore, the standard notions of terminal objects,
products and exponentials all have a P-variant in which the
defining equations hold up to $\sim$. We are unable to expound on
P-categories any further, but please refer to \cite{Cubric1998}
for more details.

\section{Exposures}
  \label{sec:expo}

We can now formulate the definition of exposures.  An exposure is
\emph{almost} a (P-)functor: it preserves identity and
compositions, but it only \emph{reflects} PERs.

\begin{definition}
  An exposure $Q : (\mathfrak{B}, \sim) \expo{} (\mathfrak{C},
  \sim)$ consists of (a) an object $QA \in \mathfrak{C}$ for each
  object $A \in \mathfrak{B}$, and (b) an arrow $Qf : QA
  \rightarrow QB$ in $\mathfrak{C}$ for each arrow $f : A
  \rightarrow B$ in $\mathfrak{B}$, such that (1) $Q(id_A) \sim
  id_{QA}$, and (2) $Q(g \circ f) \sim Qg \circ Qf$ for any arrows
  $f : A \rightarrow B$ and $g : B \rightarrow C$, and (3) for any
  $f, g : A \rightarrow B$, if $Qf \sim Qg$ then $f \sim g$.
\end{definition}

The \emph{identity exposure} $\mathsf{Id}_\mathfrak{B} :
\mathfrak{B} \expo{} \mathfrak{B}$ maps every object to itself,
and every arrow to itself. Finally, it is easy to see that the
composite of two exposures is an exposure.

As exposures give a handle on the internal structure of arrows,
they can be used to define intensional equality: if the images of
two arrows under the same exposure $Q$ are extensionally equal,
then the arrows have the same implementation, so they are
intensionally equal. This is an exact interpretation of a slogan
of Abramsky \cite{Abramsky2014}: \emph{intensions become
extensions}.

\begin{definition}[Intensional Equality]
  Let there be P-categories $\mathcal{B}$, $\mathcal{C}$, and an
  exposure $Q : (\mathcal{B}, \sim) \expo{} (\mathcal{C}, \sim)$.
  Two arrows $f, g : A \rightarrow B$ are \emph{intensionally
  equal (up to $Q$)}, written $f \approx g$, just if $Qf \sim Qg$.
\end{definition}

It is obvious then that the last axiom on the definition of
exposures means that intensional equality implies extensional
equality.

To re-interpret concepts from the modality-as-intension
interpretation---such as interpreters, quoting etc.---we shall
need a notion of transformation between exposures. 

\begin{definition}
  A \emph{natural transformation of exposures} $t : F \natexp G$
  where $F, G: \mathfrak{B} \expo{} \mathfrak{C}$ are exposures,
  consists of an arrow $t_A : FA \rightarrow GA$ of $\mathfrak{C}$
  for each object $A \in \mathfrak{B}$, such that, for every arrow
  $f : A \rightarrow B$ of $\mathfrak{B}$, the following diagram
  commutes up to $\sim$: \[
    \begin{tikzcd}
      FA
        \arrow[r, "Ff"]
        \arrow[d, "t_A", swap]
      & FB
        \arrow[d, "t_B"] \\
      GA
        \arrow[r, "Gf", swap]
      & GB
    \end{tikzcd}
  \] 
\end{definition}

\subsection{Cartesian Exposures}

Bare exposures offer no promises or guarantees regarding
intensional equality. For example, it is not a given that $\pi_1
\circ \langle f, g \rangle \approx f$. However, one may argue that
this equality should hold, insofar as there is no grand
intensional content in projecting a component. This leads to the
following notion:

\begin{definition}
  A exposure $Q : \mathfrak{B} \expo{} \mathfrak{C}$ where
  $\mathfrak{B}$ is a cartesian P-category is itself
  \emph{cartesian} just if, for arrows $f : C \rightarrow A$
  and $g : C \rightarrow B$, we have \[
    \pi_1 \circ \langle f, g \rangle \approx f,
      \quad
    \pi_2 \circ \langle f, g \rangle \approx g,
      \quad\text{and }
    \langle \pi_1 \circ h, \pi_2 \circ h \rangle \approx h
  \]
\end{definition} 

\noindent However, this is not enough to formally regain standard
equations like $\langle f, g \rangle \circ h \approx \langle f
\circ h, g \circ h \rangle$. We need to also require that
exposures `extensionally preserve' products.

\begin{definition}
  A cartesian exposure $Q : \mathcal{B} \expo{} \mathcal{C}$ of a
  cartesian P-category $\mathcal{B}$ in a cartesian P-category
  $\mathcal{C}$ is \emph{product-preserving} whenever the
  canonical arrows \begin{align*}
    \langle Q\pi_1, Q\pi_2 \rangle &:
      Q(A \times B) \rightarrow QA \times QB \\
    {!}_{Q\mathbf{1}}
	&: Q\mathbf{1} \rightarrow \mathbf{1}
  \end{align*} are P-isomorphisms. We write $m_{A, B} : QA \times
  QB \xrightarrow{\cong} Q(A \times B)$ and $m_{0} : \mathbf{1}
  \xrightarrow{\cong} Q\mathbf{1}$ for their inverses.
\end{definition}

Amongst the exposures, then, the ones that are both cartesian
\emph{and} product-preserving are the ones that behave reasonably
well in interaction with the product structure. For example, it is
an easy calculation to show that 

\begin{proposition}
  In the above setting, $m_{A, B} \circ \langle Qf, Qg \rangle
  \sim Q\langle f, g \rangle$.
\end{proposition}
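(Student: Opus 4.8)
The plan is to use the defining property of $m_{A,B}$, namely that it is the inverse (up to $\sim$) of the canonical arrow $\langle Q\pi_1, Q\pi_2\rangle : Q(A \times B) \rightarrow QA \times QB$, so that in particular $m_{A,B} \circ \langle Q\pi_1, Q\pi_2\rangle \sim id_{Q(A \times B)}$. Rather than attack $m_{A,B} \circ \langle Qf, Qg\rangle$ head on, I would first prove the auxiliary identity
\[
  \langle Qf, Qg\rangle \sim \langle Q\pi_1, Q\pi_2\rangle \circ Q\langle f, g\rangle,
\]
which I shall call $(\ast)$, and then post-compose it with $m_{A,B}$. Indeed, granting $(\ast)$, the fact that composition is a P-function lets me rewrite $m_{A,B} \circ \langle Qf, Qg\rangle \sim m_{A,B} \circ \langle Q\pi_1, Q\pi_2\rangle \circ Q\langle f, g\rangle$, and then associativity up to $\sim$ together with the inverse relation collapses $m_{A,B} \circ \langle Q\pi_1, Q\pi_2\rangle$ to $id_{Q(A\times B)}$, leaving $Q\langle f, g\rangle$, exactly as required.

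The substance of the argument is thus $(\ast)$, which I would establish by a short calculation inside the cartesian P-category $\mathcal{C}$, working from the right-hand side. First, the extensional product law $\langle a, b\rangle \circ c \sim \langle a \circ c, b \circ c\rangle$---which holds up to $\sim$ purely from the P-product structure of $\mathcal{C}$---turns $\langle Q\pi_1, Q\pi_2\rangle \circ Q\langle f, g\rangle$ into $\langle Q\pi_1 \circ Q\langle f, g\rangle,\; Q\pi_2 \circ Q\langle f, g\rangle\rangle$. Next, functoriality of the exposure (axiom (2)) gives $Q\pi_i \circ Q\langle f, g\rangle \sim Q(\pi_i \circ \langle f, g\rangle)$ for $i = 1, 2$. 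Finally, because $Q$ is cartesian we have $\pi_1 \circ \langle f, g\rangle \approx f$ and $\pi_2 \circ \langle f, g\rangle \approx g$, which by the definition of intensional equality are nothing but $Q(\pi_1 \circ \langle f, g\rangle) \sim Qf$ and $Q(\pi_2 \circ \langle f, g\rangle) \sim Qg$. Substituting these into the two components yields $\langle Qf, Qg\rangle$, which is $(\ast)$.

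I do not anticipate a genuine obstacle: the proposition is a formal consequence of the hypotheses. The one thing I would take care to make explicit is that every rewriting above is an equivalence up to $\sim$ rather than an equality on the nose, so each step tacitly uses that pairing $\langle -, -\rangle$ and composition are P-functions and hence respect $\sim$ in each argument, along with the symmetry and transitivity of the PER. The two congruence moves that carry the whole calculation are replacing a $\sim$-equal arrow inside a pairing, and replacing a $\sim$-equal factor inside a composite; both are licensed in any P-category with P-products, so the chain of equivalences goes through without difficulty.
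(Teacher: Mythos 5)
Your proof is correct, and since the paper itself omits the argument (it merely remarks that ``it is an easy calculation to show that''), yours is precisely the expected calculation: establish $\langle Q\pi_1, Q\pi_2\rangle \circ Q\langle f, g\rangle \sim \langle Qf, Qg\rangle$ via the extensional product laws of $\mathcal{C}$, the composition axiom for exposures, and the cartesianness equations $Q(\pi_i \circ \langle f, g\rangle) \sim Qf, Qg$, then cancel against the P-inverse $m_{A,B}$. Your care in flagging that each rewriting step uses only $\sim$-congruence of pairing and composition (both P-functions) is exactly the right level of rigour for this setting.
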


\noindent We can now prove that $\langle f, g \rangle \circ h
\approx \langle f \circ h, g \circ h \rangle$: \begin{align*}
  Q(\langle f, g \rangle \circ h)
    &\sim
      Q(\langle \pi_1 \circ \langle f, g \rangle \circ h, 
	      	\pi_2 \circ \langle f, g \rangle \circ h \rangle) \\
    &\sim
      m \circ 
	\langle Q(\pi \circ \langle f, g \rangle \circ h),
		Q(\pi' \circ \langle f, g \rangle \circ h) \rangle \\
    &\sim
      m \circ \langle Q(f \circ h), Q(g \circ h) \rangle
    \sim
      Q(\langle f \circ h, g \circ h \rangle)
\end{align*}

\subsection{Evaluators, Quotation Devices, and Comonadic Exposures}

Using transformations of exposures, we may begin to reinterpret
concepts from the modality-as-intension interpretation. Throughout
this section, we fix a cartesian P-category $\mathfrak{B}$, and a
cartesian, product-preserving endoexposure $Q : \mathfrak{B}
\expo{} \mathfrak{B}$.

\begin{definition}
  An \emph{evaluator} is a transformation of exposures $\epsilon :
  Q \natexp{} \textsf{Id}_{\mathfrak{B}}$.
\end{definition}

\noindent What about quoting? Given a point $a : \mathbf{1}
\rightarrow A$, its \emph{quote} is defined to be the point $Q(a)
\circ m_0 : \mathbf{1} \rightarrow QA$. We will require the
following definition:

\begin{definition}
  A arrow $\delta : QA \rightarrow Q^2 A$ is a \emph{reasonable
  quoting device} just if for any $a : \mathbf{1} \rightarrow QA$
  the following diagram commutes up to $\sim$: \[
    \begin{tikzcd}
    \mathbf{1}
      \arrow[r, "a"]
      \arrow[d, "m_0", swap]
    & Q A 
      \arrow[d, "\delta_A"] \\
    Q \mathbf{1}
      \arrow[r, "Qa", swap]
    & Q^2 A
    \end{tikzcd}
  \]
\end{definition}

\noindent A special case of this condition is the equation that
holds if a natural transformation of a similar type to $\delta$ is
monoidal, namely $\delta_\mathbf{1} \circ m_0 \sim Q(m_0) \circ
m_0$.

\begin{definition} 
  A \emph{quoter} is a transformation of exposures $\delta : Q
  \natexp Q^2$ such that every component $\delta_A : QA
  \rightarrow Q^2 A$ is a reasonable quoting device.
\end{definition}

\noindent These ingredients finally combine to form a
\emph{comonadic} exposure.

\begin{definition}
  A \emph{comonadic exposure} $(Q, \epsilon, \delta)$
  consists of an endoexposure $Q: (\mathfrak{B}, \sim)
  \looparrowright (\mathfrak{B}, \sim)$, an evaluator $\epsilon :
  Q \natexp{} \mathsf{Id}_\mathfrak{B}$, and a quoter $\delta : Q
  \natexp{} Q^2$, such that the following diagrams commute up to
  $\sim$: \[
    \begin{tikzcd}
      QA
        \arrow[r, "\delta_A"]
        \arrow[d, "\delta_A", swap]
      & Q^2 A 
        \arrow[d, "\delta_{Q(A)}"] \\
      Q^2 A 
        \arrow[r, "Q(\delta_A)", swap] 
      & Q^3 A
    \end{tikzcd} \quad
    \begin{tikzcd}
      QA
        \arrow[r, "\delta_A"]
        \arrow[d, "\delta_A", swap]
        \arrow[rd, "id_A", swap]
      & Q^2 A 
        \arrow[d, "\epsilon_{QA}"] \\
      Q^2 A
        \arrow[r, "Q(\epsilon_A)", swap]
      & QA
    \end{tikzcd}
  \]
\end{definition}

\section{Exposures \& Intensional Recursion}
  \label{sec:intrec}

Armed with the above, we can now speak of both extensional and
intensional recursion. Lawvere \cite{Lawvere2006} famously proved
a theorem which guarantees that, under certain assumptions which
we will discuss in \S\ref{sec:lawvere}, there exist fixed points
of the following sort.

\begin{definition}
  An \emph{extensional fixed point (EFP)} of an arrow $t : Y
  \rightarrow Y$ is a point $y : \mathbf{1} \rightarrow Y$ such
  that $t \circ y \sim y$. If, for a given object $Y$, every
  arrow $t : Y \rightarrow Y$ has a EFP, then we say that
  \emph{$Y$ has EFPs}.
\end{definition}

In Lawvere's paper EFPs are a kind of fixed point that
\emph{oughtn't} exist. In fact, Lawvere shows that---were truth
definable---the arrow $\lnot : \mathbf{2} \rightarrow \mathbf{2}$
representing negation would have a fixed point, i.e. a formula
$\phi$ with $\lnot\phi \leftrightarrow \phi$ that leads to
inconsistency.

EFPs do not encompass fixed points that \emph{ought} to exist. For
example, the \emph{diagonal lemma} for \emph{Peano Arithmetic}
(henceforth \textsf{PA}) stipulates that for any predicate
$\phi(x)$, there exists a closed formula $\mathbf{fix}(\phi)$ such
that \[
  \textsf{PA} \vdash
    \mathbf{fix}(\phi) \leftrightarrow \phi(\ulcorner
  \mathbf{fix}(\phi) \urcorner)
\] The formula $\mathbf{fix}(\phi)$ occurs asymmetrically: on the
left hand side of the bi-implication it appears as a truth value,
but on the right hand side it appears under a \emph{G\"odel
numbering}, i.e. an assignment $\ulcorner \cdot \urcorner$ of a
numeral to each term and formula of \textsf{PA}.  Since exposures
map values to their encoding, the following notion encompasses
this kind of `asymmetric' fixed point.

\begin{definition}
  Let $Q : \mathfrak{B} \expo{} \mathfrak{B}$ be a cartesian,
  product-preserving endoexposure.  An \emph{intensional fixed
  point (IFP)} of a arrow $t : QY \rightarrow Y$ is a point $y :
  \mathbf{1} \rightarrow Y$ such that \[
    y \sim t \circ Q(y) \circ m_0
  \] An object $A$ \emph{has IFPs (w.r.t. $Q$)} if every arrow $t
  : QA \rightarrow A$ has a IFP.
\end{definition}

This makes intuitive sense: $y : \mathbf{1} \rightarrow Y$ is
extensionally equal to $t$ `evaluated' at the point $Q(y) \circ
m_0 : \mathbf{1} \rightarrow QY$, which is the `quoted' version of
$y$. 

\section{Consistency, Truth and Provability: G\"odel and Tarski}
  \label{sec:consistency}

We are now in a position to argue that two well-known theorems
from logic can be reduced to very simple algebraic arguments
involving exposures. In fact, the gist of both arguments relies on
the existence of IFPs for an `object of truth values' in a
P-category.  The theorems in question are G\"odel's \emph{First
Incompleteness Theorem} and Tarski's \emph{Undefinability Theorem}
\cite{Smullyan1992,Boolos1994}.
 
Suppose that we have some sort of object $\mathbf{2}$ of `truth
values.' This need not be fancy: we require that it has two points
$\top : \mathbf{1} \rightarrow \mathbf{2}$ and $\bot : \mathbf{1}
\rightarrow \mathbf{2}$, standing for true and false respectively.
We  also require an arrow $\lnot : \mathbf{2} \rightarrow
\mathbf{2}$ which satisfies $\lnot \circ \top \sim \bot$ and
$\lnot \circ \bot \sim \top$.

A simplified version of G\"odel's First Incompleteness theorem for
\textsf{PA} is this:

\begin{theorem}[G\"odel]
  If \textsf{PA} is consistent, then there are sentences $\phi$ of
  $\textsf{PA}$ such that neither $\textsf{PA} \vdash \phi$ nor
  $\textsf{PA} \vdash \lnot \phi$.
\end{theorem}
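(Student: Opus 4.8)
The plan is to reduce this concrete statement to the abstract machinery of intensional fixed points, deferring to a later section the construction of a P-category $\mathfrak{B}$ for \textsf{PA} together with a cartesian, product-preserving endoexposure $Q$ playing the role of G\"odel numbering. In this P-category I would take $\mathbf{2}$ to be an object of sentences up to provable equivalence, with $\top,\bot:\mathbf{1}\to\mathbf{2}$ the (classes of the) constantly-true and constantly-false sentences and $\lnot:\mathbf{2}\to\mathbf{2}$ logical negation, so that $\lnot\circ\top\sim\bot$ and $\lnot\circ\bot\sim\top$; here the extensional PER $\sim$ is provable equivalence, and the hypothesis that \textsf{PA} is \emph{consistent} is exactly the statement $\top\not\sim\bot$. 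The crucial observation is that the diagonal lemma is precisely the existence of IFPs: an arrow $t:Q\mathbf{2}\to\mathbf{2}$ is a predicate on codes of sentences, and an IFP $y$ of $t$ is a sentence with $y\sim t\circ Q(y)\circ m_0$, that is, a sentence provably equivalent to $t$ applied to its own G\"odel number.

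With this dictionary in place I would internalise provability as an arrow $\mathrm{Prov}:Q\mathbf{2}\to\mathbf{2}$ sending a code $\ulcorner\phi\urcorner$ to the truth value of ``$\phi$ is provable,'' and form the G\"odel sentence as an IFP $G:\mathbf{1}\to\mathbf{2}$ of $\lnot\circ\mathrm{Prov}$, so that $G\sim\lnot\circ\mathrm{Prov}\circ Q(G)\circ m_0$. Representability of provability would be recorded as an adequacy condition: $\textsf{PA}\vdash\phi$ iff $\mathrm{Prov}\circ Q(\phi)\circ m_0\sim\top$. The first half of the argument is then a direct calculation. If $\textsf{PA}\vdash G$, then $G\sim\top$ (provable sentences are $\sim\top$), while adequacy gives $\mathrm{Prov}\circ Q(G)\circ m_0\sim\top$, whence $G\sim\lnot\circ\top\sim\bot$; combining these yields $\top\sim\bot$, contradicting consistency. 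Hence $\textsf{PA}\not\vdash G$.

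The second half---that $\textsf{PA}\not\vdash\lnot G$---is where I expect the real obstacle to lie, and it is the classical one. From $\textsf{PA}\vdash\lnot G$ the IFP equation forces $\mathrm{Prov}\circ Q(G)\circ m_0\sim\top$, but to conclude $\textsf{PA}\vdash G$ (and thereby inconsistency) one needs a \emph{soundness}/reflection property of $\mathrm{Prov}$ strictly stronger than bare consistency (classically $\omega$-consistency, or $\Sigma_1$-soundness). In the abstract presentation this amounts to isolating the right axioms on the provability arrow---morally an evaluator-like law $\mathrm{Prov}\Rightarrow\mathrm{truth}$ on the codes in play---so that the asymmetry of the G\"odel sentence, which mentions itself once as a truth value and once under $Q$, is absorbed exactly by the equation $y\sim t\circ Q(y)\circ m_0$ together with the coherence laws of $Q$, $\epsilon$ and $m_0$. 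The remaining, more bookkeeping-heavy task is to discharge, in the deferred construction of \S\ref{sec:arith}, the standing assumptions used above: that $\mathfrak{B}$ is cartesian, that $Q$ is a genuine cartesian product-preserving exposure, and that $\mathbf{2}$ has IFPs, so that the abstract diagonal argument is in fact licensed.
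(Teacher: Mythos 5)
You are reproducing, at bottom, the same argument the paper gives: the paper's own proof of this theorem is a short concrete sketch---the diagonal lemma supplies $\psi$ with $\textsf{PA} \vdash \psi \leftrightarrow \lnot\text{Prov}(\ulcorner \psi \urcorner)$, and the definability hypothesis ``$\textsf{PA} \vdash \phi$ if and only if $\textsf{PA} \vdash \text{Prov}(\ulcorner \phi \urcorner)$'' turns provability of either $\psi$ or $\lnot\psi$ into inconsistency---and only afterwards recasts it abstractly, as the theorem on $p : Q\mathbf{2} \rightarrow \mathbf{2}$ in \S\ref{sec:consistency}, substantiated by the Lindenbaum P-category of \S\ref{sec:arith}. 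You fuse these two stages, deriving the concrete statement inside the Lindenbaum P-category; that is legitimate, but note one genuine difference. The paper's abstract IFP theorem deliberately carries the weaker, disjunctive conclusion ``either $\mathbf{2}$ has points besides $\top$ and $\bot$, or $\top \sim \bot$,'' precisely because in a general P-category one cannot invert negation: from $\lnot \circ x \sim \bot$ nothing forces $x \sim \top$. Your second half silently performs that inversion, and it is valid only because in the Lindenbaum P-category $\sim$ is provable equivalence over \emph{classical} \textsf{PA}, where $\vdash \lnot\lnot P \leftrightarrow P$; this step should be made explicit, since it is exactly what the abstract theorem cannot do and compensates for with its disjunctive conclusion.

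Second, the ``real obstacle'' you locate in showing $\textsf{PA} \nvdash \lnot G$ is already discharged by your own hypotheses. The soundness/reflection property you ask for---from $\mathrm{Prov} \circ Q(G) \circ m_0 \sim \top$ conclude $\textsf{PA} \vdash G$---is precisely the ``if'' direction of the adequacy biconditional you assumed, and which the paper assumes verbatim. The $\omega$-consistency (or $\Sigma_1$-soundness) strength is hidden inside that hypothesis on $\text{Prov}$; this is why the paper explicitly labels its statement ``a simplified version'' of G\"odel's theorem. So under your stated assumptions the proof closes with no further axiom: $\textsf{PA} \vdash \lnot G$ gives $\lnot \circ \mathrm{Prov} \circ Q(G) \circ m_0 \sim \bot$, hence (classically) $\mathrm{Prov} \circ Q(G) \circ m_0 \sim \top$, hence $\textsf{PA} \vdash G$ by adequacy, hence $\top \sim \bot$, contradicting simple consistency. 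Your caveat is historically correct about the genuine provability predicate of \textsf{PA}---the biconditional fails under bare consistency---but it bears on the legitimacy of the adequacy hypothesis, not on any missing step in the argument as both you and the paper have set it up.
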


The proof relies on two constructions: the diagonal lemma,
and the fact that provability is definable in the system. The
definability of provability amounts to the fact that there is a
formula $\text{Prov}(x)$ with one free variable $x$ such that
$\textsf{PA} \vdash \phi$ if and only if $\textsf{PA} \vdash
\text{Prov}(\ulcorner \phi \urcorner)$. That is: the system can
internally talk about its own provability, modulo some G\"odel
numbering.

It is not then hard to sketch the proof to G\"odel's theorem: take
$\psi$ such that $\textsf{PA} \vdash \psi \leftrightarrow \lnot
\text{Prov}(\ulcorner \psi \urcorner)$. Then $\psi$ is provable if
and only if it is not, so if either $\textsf{PA} \vdash \psi$ or
$\textsf{PA} \vdash \lnot\psi$ we would observe inconsistency.
Thus, if $\textsf{PA}$ is consistent, neither $\psi$ nor its
negation are provable. It follows that $\psi$ is neither
equivalent to $\bot$ or to $\top$. In a way, $\psi$ has an eerie
truth value, neither $\top$ nor  $\bot$.

Let us represent the provability predicate as an arrow $p :
Q\mathbf{2} \rightarrow \mathbf{2}$ such that $y \sim \top$ if and
only if $p \circ Q(y) \circ m_0 \sim \top$. Consistency is
captured by the following definition: \begin{definition}
  An object $\mathbf{2}$ as above is \emph{simply consistent} just
  if $\top \not\sim \bot$.
\end{definition}

Armed with this machinery, we can transport the argument
underlying G\"odel's proof to our more abstract setting:
\begin{theorem}
  If a $p : Q\mathbf{2} \rightarrow \mathbf{2}$ is as
  above, and $\mathbf{2}$ has IFPs, then one of the following
  things is true: either (a) there are points of $\mathbf{2}$
  other than $\top : \mathbf{1} \rightarrow \mathbf{2}$ and $\bot
  : \mathbf{1} \rightarrow \mathbf{2}$; or (b) $\mathbf{2}$ is
  \emph{not} simply consistent, i.e. $\top \sim \bot$.
\end{theorem}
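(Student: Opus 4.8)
The plan is to mimic the informal G\"odel argument given just above the statement, replacing the diagonal lemma by the hypothesis that $\mathbf{2}$ has IFPs. First I would apply the IFP hypothesis to the arrow $\lnot \circ p : Q\mathbf{2} \rightarrow \mathbf{2}$, obtaining a point $\psi : \mathbf{1} \rightarrow \mathbf{2}$ with $\psi \sim \lnot \circ p \circ Q(\psi) \circ m_0$. This $\psi$ is the abstract analogue of the self-referential sentence asserting its own unprovability; the composite $p \circ Q(\psi) \circ m_0$ plays the role of $\text{Prov}(\ulcorner \psi \urcorner)$, and it is again a point of $\mathbf{2}$.

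The argument then proceeds by contraposition: I would assume that neither (a) nor (b) holds, i.e. that $\mathbf{2}$ is simply consistent ($\top \not\sim \bot$) and that every point $\mathbf{1} \rightarrow \mathbf{2}$ is extensionally equal to $\top$ or to $\bot$, and derive a contradiction. Under this assumption $\psi$ itself must be $\sim \top$ or $\sim \bot$, and I would rule out both alternatives.

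In the case $\psi \sim \top$, the provability biconditional gives $p \circ Q(\psi) \circ m_0 \sim \top$, whence, since $\lnot$ respects $\sim$, we get $\psi \sim \lnot \circ p \circ Q(\psi) \circ m_0 \sim \lnot \circ \top \sim \bot$; combined with $\psi \sim \top$ this yields $\top \sim \bot$, contradicting simple consistency. In the case $\psi \sim \bot$, simple consistency gives $\psi \not\sim \top$, so the biconditional gives $p \circ Q(\psi) \circ m_0 \not\sim \top$; here I would invoke the no-other-points assumption to conclude $p \circ Q(\psi) \circ m_0 \sim \bot$, and then $\psi \sim \lnot \circ \bot \sim \top$, again forcing $\top \sim \bot$. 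Both cases collapse to the same contradiction, so the disjunction must hold.

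The main subtlety --- and the reason the conclusion is a disjunction rather than a flat contradiction --- is precisely this last step. The biconditional defining $p$ is phrased only in terms of $\top$: it records when the quoted sentence is provable, so a failure of provability yields only $p \circ Q(\psi) \circ m_0 \not\sim \top$, and \emph{not} $p \circ Q(\psi) \circ m_0 \sim \bot$. Passing from ``not true'' to ``false'' requires that $\mathbf{2}$ carry no truth values beyond $\top$ and $\bot$, which is exactly the negation of alternative (a). Thus the dichotomy of points is the one genuinely non-routine ingredient; everything else is a direct calculation using that composition respects $\sim$ together with the negation equations $\lnot \circ \top \sim \bot$ and $\lnot \circ \bot \sim \top$ already assumed.
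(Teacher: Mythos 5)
Your proposal is correct, and its core is the same diagonal argument as the paper's: both take an IFP $\psi \sim \lnot \circ p \circ Q(\psi) \circ m_0$ of the arrow $\lnot \circ p$, and both treat the case $\psi \sim \top$ identically (note that you apply the biconditional defining $p$ pointwise to $\psi$, rather than illegitimately replacing $Q(\psi)$ by $Q(\top)$, which an exposure would not license). The difference lies entirely in the second case, and it is to your credit. The paper argues directly: it proves ``either $y \not\sim \top$ or $\mathbf{2}$ is not simply consistent,'' and then asserts ``similarly, either $y \not\sim \bot$ or $\mathbf{2}$ is not simply consistent,'' so that under simple consistency the IFP $y$ itself is the third truth value. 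But the situation is not symmetric: as you observe, the specification of $p$ mentions only $\top$, so from $y \sim \bot$ one obtains only $p \circ Q(y) \circ m_0 \not\sim \top$, and nothing in the hypotheses rules out that this point is some third value $u$ with $\lnot \circ u \sim \bot$ --- in which case $y \sim \bot$ coexists with simple consistency and the theorem holds via $u$ rather than via $y$. Your contrapositive arrangement --- assume both (a) and (b) fail, and use the resulting dichotomy of points to upgrade $p \circ Q(\psi) \circ m_0 \not\sim \top$ to $p \circ Q(\psi) \circ m_0 \sim \bot$ --- supplies exactly the ingredient that the paper's ``similarly'' elides, at the mild cost that your witness for alternative (a) is only one of $\psi$ or $p \circ Q(\psi) \circ m_0$, rather than the G\"odel sentence $\psi$ itself. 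In short: same skeleton as the paper, but with the one genuinely non-routine step --- which you correctly identified as the very reason the conclusion is a disjunction --- made explicit, where the paper's own proof glosses over it.
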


\begin{proof}
  As $\mathbf{2}$ has IFPs, take $y : \mathbf{1} \rightarrow
  \mathbf{2}$ such that $y \sim \lnot \circ p \circ Q(y) \circ
  m_0$. Now, if $y \sim \top$, then by the property of $p$ above,
  $p \circ Q(y) \circ m_0 \sim \top$, hence $\lnot \circ p \circ
  Q(y) \circ m_0 \sim \bot$, hence $y \sim \bot$. So either $y
  \not\sim \top$ or $\mathbf{2}$ is not simply consistent.
  Similarly, either $y \not\sim \bot$ or $\mathbf{2}$ is not
  simply consistent.
\end{proof}

Tarski's \emph{Undefinability Theorem}, on the other hand is the
result that \emph{truth cannot be defined in arithmetic}
\cite{Smullyan1992}.

\begin{theorem}[Tarski]
  If \textsf{PA} is consistent, then there is no predicate
  $\text{True}(x)$ such that $\textsf{PA} \vdash \phi
  \leftrightarrow \text{True}(\ulcorner \phi \urcorner)$ for all
  sentences $\phi$.
\end{theorem}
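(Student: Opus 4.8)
The plan is to mirror the abstract Gödel argument just proved, but now using the truth predicate in place of the provability predicate. The key structural observation is that Tarski's theorem is, in our algebraic setting, even simpler than Gödel's: where provability gave us an arrow $p : Q\mathbf{2} \rightarrow \mathbf{2}$ satisfying only the biconditional $y \sim \top \iff p \circ Q(y) \circ m_0 \sim \top$, a truth predicate $\text{True}(x)$ should give us an arrow $\theta : Q\mathbf{2} \rightarrow \mathbf{2}$ satisfying the full extensional equality $\theta \circ Q(y) \circ m_0 \sim y$ for every point $y : \mathbf{1} \rightarrow \mathbf{2}$. This is the faithful rendering of $\textsf{PA} \vdash \phi \leftrightarrow \text{True}(\ulcorner \phi \urcorner)$: the quoted point $Q(y) \circ m_0$ plays the role of the Gödel number $\ulcorner \phi \urcorner$, and $\theta$ recovers the original truth value.

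First I would state the abstract version of the theorem: if $\mathbf{2}$ is simply consistent and has IFPs with respect to $Q$, then there is no arrow $\theta : Q\mathbf{2} \rightarrow \mathbf{2}$ such that $\theta \circ Q(y) \circ m_0 \sim y$ for all points $y : \mathbf{1} \rightarrow \mathbf{2}$. The proof is a direct diagonalisation. Suppose such a $\theta$ existed. Consider the composite $\lnot \circ \theta : Q\mathbf{2} \rightarrow \mathbf{2}$. Since $\mathbf{2}$ has IFPs, there is a point $y : \mathbf{1} \rightarrow \mathbf{2}$ with $y \sim \lnot \circ \theta \circ Q(y) \circ m_0$. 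Now I would compute: by the defining property of $\theta$, we have $\theta \circ Q(y) \circ m_0 \sim y$, and substituting this into the fixed-point equation gives $y \sim \lnot \circ y$. Chasing through the two points: if $y \sim \top$ then $y \sim \lnot \circ \top \sim \bot$, and if $y \sim \bot$ then $y \sim \lnot \circ \bot \sim \top$; in either case we would derive $\top \sim \bot$, contradicting simple consistency. One still has to argue that $y$ is extensionally one of $\top$ or $\bot$ — this is where, unlike in the Gödel statement, we must either assume $\mathbf{2}$ has only the two points $\top, \bot$ or phrase the conclusion disjunctively as in the previous theorem.

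The main obstacle, such as it is, lies not in the diagonal calculation but in correctly formalising what a "truth predicate" becomes in the abstract setting and in verifying that the fixed-point substitution is legitimate up to $\sim$. Concretely, the delicate step is justifying $y \sim \lnot \circ \theta \circ Q(y) \circ m_0 \sim \lnot \circ y$, which requires that $\lnot$ respects the PER (true since it is an arrow of the P-category) and that we may rewrite the subterm $\theta \circ Q(y) \circ m_0$ by $y$ inside the larger composite — again legitimate because composition is a P-function and hence preserves $\sim$. I expect these to be routine once the setup is fixed. The conceptual payoff, which I would emphasise, is that Tarski's theorem reduces to the observation that a total truth predicate turns $\lnot$ into a self-map whose IFP would be a genuine fixed point of negation, exactly the inconsistent situation Lawvere identified; the only essential ingredients are the existence of IFPs for $\mathbf{2}$ and simple consistency.
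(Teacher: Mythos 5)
Your diagonalisation is correct and coincides at its core with the paper's argument, but the packaging of both hypothesis and conclusion differs in ways worth comparing. Where you posit a single arrow $\theta : Q\mathbf{2} \rightarrow \mathbf{2}$ satisfying the pointwise equation $\theta \circ Q(y) \circ m_0 \sim y$, the paper abstracts the truth predicate as an \emph{evaluator}, i.e.\ a transformation of exposures $\epsilon : Q \natexp \mathsf{Id}_{\mathfrak{B}}$, and \emph{derives} your pointwise equation at the component $\epsilon_{\mathbf{2}}$ from naturality together with terminality of $\mathbf{1}$: $\epsilon_{\mathbf{2}} \circ Q(y) \circ m_0 \sim y \circ \epsilon_{\mathbf{1}} \circ m_0 \sim y$. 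Your hypothesis is strictly weaker (only one component, only on points) and suffices for the diagonal, a mild gain in generality; the paper's choice buys a connection to the comonadic structure (the evaluator is the counit) and a reusable lemma --- in the presence of an evaluator, any object with IFPs has EFPs --- of which your computation is precisely the instance $t = \lnot$. The more substantive divergence is the conclusion. You aim at $\top \sim \bot$ (simple inconsistency) and correctly flag the obstacle: one needs $\mathbf{2}$ to have no points besides $\top$ and $\bot$ up to $\sim$, an assumption the paper's own G\"odel theorem shows cannot be taken for granted, since that theorem's whole point is that $\mathbf{2}$ may have extra, ``eerie'' points. The paper resolves this exactly along the lines of your caveat, but more cleanly than either of your two suggested fixes: it introduces \emph{fix-consistency} ($\lnot : \mathbf{2} \rightarrow \mathbf{2}$ has no EFP) and stops at your intermediate conclusion $y \sim \lnot \circ y$, which is already the abstract rendering of $\textsf{PA} \vdash \psi \leftrightarrow \lnot\psi$ and hence of the inconsistency Tarski's proof exhibits. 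So your proof is sound provided you take the fixed point of negation itself as the contradiction rather than pushing on to $\top \sim \bot$; the final paragraph of your proposal, identifying an IFP-induced fixed point of $\lnot$ as the essential content, is exactly the paper's formulation.
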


The proof is simple: use the diagonal lemma to obtain a closed
$\psi$ such that $\textsf{PA} \vdash \psi \leftrightarrow \lnot
\text{True}(\ulcorner \psi \urcorner)$, so that $\textsf{PA}
\vdash \psi \leftrightarrow \lnot \psi$, which leads to
inconsistency.

Now, a proof predicate would constitute an evaluator $\epsilon : Q
\natexp{} \textsf{Id}_\mathfrak{B}$: we would have that \[
        \epsilon_\mathbf{2} \circ Q(y) \circ m_0
  \sim  y \circ \epsilon_\mathbf{1} \circ m_0
  \sim  y
\] where the last equality is because $\mathbf{1}$ is terminal.
This is actually a more general \begin{lemma}
  Let $Q : \mathfrak{B} \expo{} \mathfrak{B}$ be an endoexposure,
  and let $\epsilon : Q \natexp{} \textsf{Id}_\mathfrak{B}$ be an
  evaluator. Then, if $A$ has IFPs then it also has EFPs.
\end{lemma}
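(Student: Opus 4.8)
The plan is to reduce the existence of an EFP for an arbitrary endomorphism $s : A \rightarrow A$ to the existence of an IFP for a suitably chosen arrow of type $QA \rightarrow A$. The evaluator supplies exactly the coercion $QA \rightarrow A$ that we need, so the natural candidate is $s \circ \epsilon_A : QA \rightarrow A$. Since $A$ has IFPs by hypothesis, this arrow possesses an IFP, i.e. a point $y : \mathbf{1} \rightarrow A$ with $y \sim (s \circ \epsilon_A) \circ Q(y) \circ m_0$. My claim is that this very $y$ is an EFP of $s$, so that the entire argument is just a matter of simplifying the IFP equation.

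First I would note that $y$ is a well-defined point ($y \sim y$), so that the equational reasoning below is legitimate in the P-categorical setting. Next I would isolate the subterm $\epsilon_A \circ Q(y) \circ m_0$ and show it collapses to $y$. This is precisely the computation carried out just before the statement of this lemma: naturality of the evaluator $\epsilon : Q \natexp{} \textsf{Id}_\mathfrak{B}$ at the arrow $y : \mathbf{1} \rightarrow A$ gives $\epsilon_A \circ Q(y) \sim y \circ \epsilon_{\mathbf{1}}$, hence $\epsilon_A \circ Q(y) \circ m_0 \sim y \circ \epsilon_{\mathbf{1}} \circ m_0$. The composite $\epsilon_{\mathbf{1}} \circ m_0 : \mathbf{1} \rightarrow \mathbf{1}$ is an endomorphism of the terminal object, so it is extensionally the identity, $\epsilon_{\mathbf{1}} \circ m_0 \sim id_{\mathbf{1}}$; therefore $\epsilon_A \circ Q(y) \circ m_0 \sim y$.

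Finally I would chain the defining equation of the IFP with this simplification:
\[
  y \sim s \circ \epsilon_A \circ Q(y) \circ m_0 \sim s \circ y,
\]
using congruence of composition with respect to $\sim$ to pull $s$ outside. This yields $s \circ y \sim y$, i.e. $y$ is an EFP of $s$. As $s$ was an arbitrary endomorphism of $A$, we conclude that $A$ has EFPs.

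I expect no serious obstacle here; the argument is essentially the evaluator computation that precedes the lemma, packaged once and for all. The only step requiring care is the collapse $\epsilon_A \circ Q(y) \circ m_0 \sim y$, which depends on two facts that must be invoked in the right order: naturality of $\epsilon$ holding only up to $\sim$, and terminality of $\mathbf{1}$ forcing $\epsilon_{\mathbf{1}} \circ m_0 \sim id_{\mathbf{1}}$. One should also keep track of the well-definedness of the intermediate arrows, since in a P-category $\sim$ is merely a PER, and it is congruence of composition that licenses replacing $\epsilon_A \circ Q(y) \circ m_0$ by $y$ underneath $s$.
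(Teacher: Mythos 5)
Your proposal is correct and matches the paper's own proof essentially verbatim: both take an IFP $y$ of $t \circ \epsilon_A : QA \rightarrow A$ and collapse $\epsilon_A \circ Q(y) \circ m_0 \sim y$ via naturality of the evaluator together with terminality of $\mathbf{1}$, exactly as in the computation the paper carries out just before the lemma. Your added care about well-definedness and the PER-congruence of composition is a sound elaboration of the same argument, not a different route.
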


\begin{proof}
  Given $t : A \rightarrow A$, consider $t \circ \epsilon_A : QA
  \rightarrow A$. A IFP for this arrow is a point $y : \mathbf{1}
  \rightarrow A$ such that $y \sim t \circ \epsilon_A \circ Q(y)
  \circ m_0$. But we may calculate as above to show that
  $\epsilon_A \circ Q(y) \circ m_0 \sim y$ and thus $y \sim t
  \circ y$.
\end{proof}

In proving Tarski's theorem, we constructed a sentence $\psi$ such
that $\textsf{PA} \vdash \psi \leftrightarrow \lnot \psi$. This
can be captured abstractly by the following definition.
\begin{definition}
  An object $\mathbf{2}$ as above is \emph{fix-consistent} just if
  the arrow $\lnot : \mathbf{2} \rightarrow \mathbf{2}$ has no
  EFP; that is, there is no $y : \mathbf{1} \rightarrow
  \mathbf{2}$ such that $\lnot \circ y \sim y$. 
\end{definition}

Putting these together, we get

\begin{theorem}
  If $\mathbf{2}$ has IFPs in the presence of an evaluator, then
  it is \emph{not} fix-consistent.
\end{theorem}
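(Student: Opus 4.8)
The plan is to recognize that this theorem is essentially a repackaging of the Lemma proved just above, which shows that in the presence of an evaluator the existence of IFPs forces the existence of EFPs. The only additional ingredient is to read off the definition of fix-consistency, which asserts precisely that $\lnot$ has \emph{no} EFP. Thus the whole argument is simply the contrapositive of that reading.

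First I would apply the Lemma directly. We are given that $\mathbf{2}$ has IFPs and that an evaluator $\epsilon : Q \natexp{} \textsf{Id}_\mathfrak{B}$ is available, so the Lemma immediately yields that $\mathbf{2}$ has EFPs. Unfolding the definition of ``has EFPs,'' this means that every arrow $t : \mathbf{2} \rightarrow \mathbf{2}$ admits a point $y : \mathbf{1} \rightarrow \mathbf{2}$ with $t \circ y \sim y$.

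Next I would specialize to $t = \lnot$. Since $\mathbf{2}$ comes equipped with the negation arrow $\lnot : \mathbf{2} \rightarrow \mathbf{2}$, the previous step provides a point $y : \mathbf{1} \rightarrow \mathbf{2}$ satisfying $\lnot \circ y \sim y$, that is, an EFP of $\lnot$. By the definition of fix-consistency---which requires that no such $y$ exist---the object $\mathbf{2}$ is therefore \emph{not} fix-consistent, as claimed.

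There is no genuine obstacle here: the mathematical content lives entirely in the Lemma, and the proof amounts to instantiating its conclusion at the particular arrow $\lnot$ and matching it against the negation built into the definition of fix-consistency. The only point worth stating with a little care is that the Lemma applies uniformly to every object, and in particular to $\mathbf{2}$, so that the hypothesis ``every arrow $\mathbf{2} \rightarrow \mathbf{2}$ has an EFP'' may legitimately be instantiated at $\lnot$ itself.
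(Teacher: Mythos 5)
Your proof is correct and matches the paper's intent exactly: the paper states this theorem with ``Putting these together, we get,'' i.e.\ it is obtained precisely by applying the preceding Lemma to conclude that $\mathbf{2}$ has EFPs and then instantiating at $\lnot$, which contradicts fix-consistency. Nothing is missing.
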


\section{An Exposure on Arithmetic}
  \label{sec:arith}

We will substantiate the results of the previous section by
sketching the construction of a P-category and endoexposure based
on a first-order theory. The method is very similar to that of
Lawvere \cite{Lawvere2006}, and we will also call it the
\emph{Lindenbaum P-category} of the theory. The construction is
general, and so is the thesis of this section: an exposure on a
Lindenbaum P-category abstractly captures the notion of a
well-behaved G\"odel `numbering' on the underlying theory.

Let there be a single-sorted first-order theory $\mathsf{T}$.  The
objects of the P-category are the formal products of (a)
$\mathbf{1}$, the terminal object, (b) $A$, the
\emph{domain}, and (c) $\mathbf{2}$, the object of \emph{truth
values}. Arrows $\mathbf{1} \rightarrow A$ and $A \rightarrow A$
are \emph{terms} with no or one free variable respectively. Arrows
$A^n \rightarrow \mathbf{2}$ and $\mathbf{1} \rightarrow
\mathbf{2}$ are \emph{predicates}, with $n$ and no free variables
respectively. Finally, arrows $\mathbf{2}^n \rightarrow
\mathbf{2}$ can be thought of as \emph{logical connectives} (e.g.
$\land : \mathbf{2} \times \mathbf{2} \rightarrow \mathbf{2}$).

Two arrows $s, t : C \rightarrow A$ with codomain $A$ (i.e. two
terms of the theory) are related if and only if they are provably
equal, i.e. $s \sim t$ iff $\textsf{T} \vdash s = t$. Two arrows
$\phi, \psi : C \rightarrow \mathbf{2} $ with codomain
$\mathbf{2}$ are related if and only if they are provably
equivalent, i.e. $\phi \sim \psi$ iff $\textsf{T} \vdash \phi
\leftrightarrow \psi$.

To define an exposure, it suffices to have a G\"odel numbering,
i.e. a representation of terms and formulas of the theory as
elements of its domain $A$. More precisely, we need a G\"odel
numbering for which \emph{substitution is internally definable}.
We write $\ulcorner \phi(x_1, \dots, x_n) \urcorner$ and
$\ulcorner t(a_1, \dots, a_m) \urcorner$ for the G\"odel numbers
of the formula $\phi(x_1, \dots, x_n)$ and the term $t(a_1, \dots,
a_m)$ respectively, and we assume that $\ulcorner \cdot \urcorner$
is injective.  Let $QA \myeq A$, $Q(\mathbf{2}) \myeq A$, and
$Q(\mathbf{1}) \myeq \mathbf{1}$.  Finally, define $Q$ to act
component-wise on finite products: this will guarantee that it is
cartesian and product-preserving.

The action on arrows is what necessitated that substitution be
definable: this amounts to the existence of a term $sub(y, x)$
with the property that if $\phi(x)$ is a predicate and $t$ is a
term, then $\textsf{T} \vdash sub(\ulcorner
\phi \urcorner, \ulcorner t \urcorner) = \ulcorner \phi(t)
\urcorner$.  Now, given a predicate $\phi : A \rightarrow
\mathbf{2}$ with one free variable, $Q(\phi) : A \rightarrow A$ is
defined to be the term $sub(\ulcorner \phi \urcorner, x)$. Given a
sentence $\phi : \mathbf{1} \rightarrow \mathbf{2}$, we define
$Q(\phi) : \mathbf{1} \rightarrow A$ to be exactly the closed term
$\ulcorner \phi \urcorner$. The action is similar on arrows with
codomain $A$, and component-wise on product arrows. The last axiom
of exposures is satisfied: if $Q\phi \sim Q\psi$, then $\ulcorner
\phi \urcorner = \ulcorner \psi \urcorner$, so that $\phi = \psi$,
by the injectivity of the G\"odel numbering.

In this setting, IFPs really are fixpoints of formulas.

\section{Where do IFPs come from?}
  \label{sec:lawvere}

In \S\ref{sec:intrec} we mentioned Lawvere's fixed point theorem.
This theorem guarantees the existence of EFPs under the assumption
that there is an arrow of this form:

\begin{definition}
  An arrow $r : X \times A \rightarrow Y$ is \emph{weakly-point
  surjective} if, for every $f : A \rightarrow Y$, there exists a
  $x_f : \mathbf{1} \rightarrow X$ such that for all points $a :
  \mathbf{1} \rightarrow A$ it is the case that $r \circ \langle
  x_f, a \rangle \sim f \circ a$.
\end{definition}

So a weak-point surjection is a bit like `pointwise cartesian
closure,' in that the effect of all arrows $A \rightarrow Y$ on
points $\mathbf{1} \rightarrow A$ is representable by some point
$\mathbf{1} \rightarrow X$, w.r.t.  $r$. Lawvere noticed that if
the `exponential' $X$ and the domain $A$ coincide, then a simple
diagonal argument yields fixpoints for all arrows $Y \rightarrow
Y$.

\begin{theorem}[Lawvere]
  If $r : A \times A \rightarrow Y$ is a weak-point surjection,
  then every arrow $t : Y \rightarrow Y$ has an extensional fixed
  point (EFP).
\end{theorem}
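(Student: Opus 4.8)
The plan is to run Lawvere's classic diagonal argument, taking care that every equation is read up to the extensional PER $\sim$ rather than on the nose. The idea is to feed the endomap $t$ whose fixed point we seek through the weak-point surjection $r$ along the diagonal of $A$, obtain a representing point for the resulting map, and then evaluate that point at itself.

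First I would form the \emph{diagonalized} arrow $g \myeq t \circ r \circ \langle id_A, id_A \rangle : A \rightarrow Y$, which intuitively sends a point $a$ to $t(r(a,a))$. Since $r$ is a weak-point surjection, applying the definition to $g : A \rightarrow Y$ produces a point $x_g : \mathbf{1} \rightarrow A$ such that, for every point $a : \mathbf{1} \rightarrow A$,
\[
  r \circ \langle x_g, a \rangle \sim g \circ a .
\]

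Next comes the diagonal step: I would instantiate this identity at the point $a \myeq x_g$ itself, yielding $r \circ \langle x_g, x_g \rangle \sim g \circ x_g$. Unfolding $g$ and pushing $x_g$ inside the pairing via the cartesian P-structure, $\langle id_A, id_A \rangle \circ x_g \sim \langle x_g, x_g \rangle$, so that $g \circ x_g \sim t \circ r \circ \langle x_g, x_g \rangle$. Setting $y \myeq r \circ \langle x_g, x_g \rangle : \mathbf{1} \rightarrow Y$ and chaining these facts by transitivity of $\sim$ gives $y \sim g \circ x_g \sim t \circ y$; by symmetry this is $t \circ y \sim y$, which is exactly the EFP condition.

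The substantive point --- and the only place the P-categorical setting demands care --- is the bookkeeping of $\sim$. Unlike in an ordinary category, I cannot silently identify $\langle id_A, id_A \rangle \circ x_g$ with $\langle x_g, x_g \rangle$; I must appeal to the fact that the defining equations of the P-product hold only up to $\sim$, and check that every arrow in sight is well-defined (related to itself) so that composition respects the PERs. I also need the instantiation $a \mapsto x_g$ to be legitimate, which requires $x_g$ to be a genuine point, i.e. $x_g \sim x_g$; this is guaranteed because the weak-point surjection delivers a well-defined representing point. Once these routine well-definedness obligations are discharged, transitivity and symmetry of $\sim$ close the argument with no further difficulty.
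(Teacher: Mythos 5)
Your proposal is correct and is essentially identical to the paper's own proof: the same diagonalized arrow $t \circ r \circ \langle id_A, id_A \rangle$, the same representing point from the weak-point surjection instantiated at itself, and the same use of $\langle id_A, id_A \rangle \circ x_g \sim \langle x_g, x_g \rangle$, which the paper also invokes (implicitly) in its chain of $\sim$-equations. Your explicit attention to well-definedness of the points with respect to the PER is a sound elaboration of bookkeeping the paper leaves tacit, but it does not change the argument.
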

\begin{proof}
  Let $f \myeq t \circ r \circ \langle id_A, id_A \rangle$. Then
  there exists a $x_f : \mathbf{1} \rightarrow A$ such that $r
  \circ \langle x_f, a \rangle \sim f \circ a$ for all $a :
  \mathbf{1} \rightarrow A$. We compute that $r \circ \langle x_f,
  x_f \rangle \sim\ t \circ r \circ \langle id_A, id_A \rangle
  \circ x_f \sim\ t \circ r \circ \langle x_f, x_f \rangle$, so
  that $r \circ \langle x_f, x_f \rangle$ is a EFP of $t$.
\end{proof}

Can we adapt Lawvere's result to IFPs? The answer is positive, and
rather straightforward once we embellish the statement with
appropriate occurrences of $Q$. We also need a reasonable quoting
device.

\begin{theorem}
  Let $Q$ be a monoidal exposure, and let $\delta_A : QA
  \rightarrow Q^2 A$ be a reasonable quoting device. If $r : QA
  \times QA \rightarrow Y$ is a weak-point surjection then every
  arrow $t : QY \rightarrow Y$ has an intensional fixed point.
\end{theorem}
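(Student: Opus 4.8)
The plan is to mimic Lawvere's diagonal argument, but at the level of the exposure: instead of feeding $x_f$ to itself via $r \circ \langle x_f, x_f \rangle$, I will feed $x_f$ to its own \emph{quote}, using $\delta$ to supply the second layer of $Q$ that the type $t : QY \to Y$ demands. Since $r$ takes $QA \times QA$ as input and produces $Y$, and an IFP needs the shape $y \sim t \circ Q(y) \circ m_0$, the extra $Q$ has to come from somewhere — that is precisely what the quoting device $\delta$ is for.

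First I would define the arrow to be diagonalized. Whereas Lawvere sets $f = t \circ r \circ \langle id_A, id_A\rangle$, here I want an arrow $A \to Y$ of the form $QA \to Y$ so it can be fed into the weak-point surjection $r : QA \times QA \to Y$. The natural candidate is \[ f \myeq t \circ Q(r) \circ Q\langle id_{QA}, id_{QA}\rangle \circ \delta_A : QA \rightarrow Y, \] where $Q(r) \circ Q\langle id, id\rangle : QA \to QY$ (using that $Q$ preserves composition up to $\sim$ and that $Q$ sends the diagonal on $QA$ into $Q(QA \times QA) \cong Q^2 A$ via the monoidal structure $m$), and $\delta_A$ inserts the needed second $Q$. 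I would then apply weak-point surjectivity of $r$ to this $f$ to obtain a point $x_f : \mathbf{1} \to QA$ with $r \circ \langle x_f, a\rangle \sim f \circ a$ for every point $a : \mathbf{1} \to QA$.

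The fixed point itself will be $y \myeq r \circ \langle x_f, x_f \rangle : \mathbf{1} \to Y$. The core computation is to instantiate the weak-point surjection equation at $a = x_f$, obtaining $y \sim f \circ x_f$, and then to unfold $f \circ x_f$ into the IFP shape $t \circ Q(y) \circ m_0$. This is where the reasonable quoting device earns its keep: the defining square for $\delta$ says that for a point $x_f : \mathbf{1} \to QA$ we have $\delta_A \circ x_f \sim Q(x_f) \circ m_0$, which lets me convert $\delta_A \circ x_f$ (sitting inside $f \circ x_f$) into something of the form $Q(\text{point into } QA) \circ m_0$. After that I push $Q$ through the composite using functoriality up to $\sim$ and the monoidal coherence relating $Q\langle -, -\rangle$ to $m$ and $\langle Q-, Q-\rangle$ (the Proposition proved earlier, $m_{A,B} \circ \langle Qf, Qg\rangle \sim Q\langle f, g\rangle$), recognizing the inner piece as $Q(y) = Q(r \circ \langle x_f, x_f\rangle)$.

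The main obstacle will be the bookkeeping in that last unfolding: I must show that $Q(r) \circ Q\langle id, id\rangle \circ \delta_A \circ x_f$ reassembles exactly into $Q(r \circ \langle x_f, x_f\rangle) \circ m_0$, so that $f \circ x_f \sim t \circ Q(y) \circ m_0$ as required. This means carefully tracking how the diagonal $\langle id_{QA}, id_{QA}\rangle$ interacts with the monoidal comparison maps $m_{A,A}$ and the quoting square for $\delta$ — all the category-theoretic identities hold only up to $\sim$, so I must verify that each rewrite is an extensional equality and that they compose. I expect no genuine conceptual difficulty beyond this coherence chase, since the shape of the argument is forced by the types; the delicate point is simply confirming that the single application of $\delta$ supplies precisely the one extra $Q$ needed and lands it in the right position.
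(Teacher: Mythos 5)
Your proposal is correct and takes essentially the same route as the paper: the paper diagonalizes $f = t \circ Qr \circ m_{QA,QA} \circ \langle \delta_A, \delta_A \rangle$, which is extensionally equal to your $t \circ Q(r) \circ Q\langle id_{QA}, id_{QA}\rangle \circ \delta_A$ by the very proposition you invoke ($m_{A,B} \circ \langle Qf, Qg\rangle \sim Q\langle f,g\rangle$, together with $Q(id) \sim id$), and you correctly avoid the one real trap — one may never rewrite under $Q$ using $\sim$, since exposures only reflect PERs. The rest of your plan (weak-point surjectivity applied to $f$, instantiating at $a = x_f$, the quoting square $\delta_A \circ x_f \sim Q(x_f) \circ m_0$, and reassembling $m \circ \langle Q(x_f), Q(x_f)\rangle \sim Q\langle x_f, x_f\rangle$ to recognize $Q(r \circ \langle x_f, x_f\rangle) \circ m_0$) matches the paper's computation step for step.
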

\begin{proof}
  Let $f \myeq t \circ Qr \circ m_{QA, QA} \circ
  \langle \delta_A, \delta_A \rangle$. Then there exists a $x_f :
  \mathbf{1} \rightarrow QA$ such
  that $r \circ \langle x_f, a \rangle \sim f \circ a$ for all $a
  : \mathbf{1} \rightarrow QA$. We compute that
  \begin{align*}
    r \circ \langle x_f, x_f \rangle
    \sim\  & t \circ Qr \circ m \circ \langle \delta_A, \delta_A \rangle \circ x_f
    \sim\  t \circ Qr \circ m \circ \langle \delta_A \circ x_f, \delta_A \circ x_f \rangle \\
    \sim\  & t \circ Qr \circ m \circ \langle Q(x_f) \circ m_0, Q(x_f) \circ m_0 \rangle \\
    \sim\  & t \circ Qr \circ m \circ \langle Q(x_f), Q(x_f) \rangle \circ m_0 \\
    \sim\  & t \circ Qr \circ Q(\langle x_f, x_f \rangle) \circ m_0
    \sim\  t \circ Q(r \circ \langle x_f, x_f \rangle) \circ m_0
  \end{align*} so that $r \circ \langle x_f, x_f \rangle$ is a IFP
  of $t$.
\end{proof}

In the next section, we shall see that this is a true categorical
analogue of Kleene's \emph{Second Recursion Theorem (SRT)}.

\section{The Recursion Theorems}
  \label{sec:recthm}

In fact, the theorem we just proved in \S\ref{sec:lawvere} is
strongly reminiscent of a known theorem in (higher order)
computability theory, namely a version of Kleene's \emph{First
Recursion Theorem (FRT)}. 

Let us fix some notation.  We write $\simeq$ for \emph{Kleene
equality}: we write $e \simeq e'$ to mean either that both
expressions $e$ and $e'$ are undefined, if either both are
undefined, or both are defined and of equal value. Let $\phi_0$,
$\phi_1$, $\dots$ be an enumeration of the partial recursive
functions. We will also require the \emph{s-m-n theorem} from
computability theory. Full definitions and statements may be found
in the book by Cutland \cite{Cutland1980}.

\begin{theorem}[First Recursion Theorem]
  \label{thm:frt}
  Let $\mathcal{PR}$ be the set of unary partial recursive
  functions, and let $F : \mathcal{PR} \rightarrow \mathcal{PR}$
  be an \emph{effective operation}. Then $F : \mathcal{PR}
  \rightarrow \mathcal{PR}$ has a fixed point.
\end{theorem}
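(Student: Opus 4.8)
The plan is to reduce the abstract hypothesis that $F$ is an \emph{effective operation} to a concrete statement about indices, and then to extract the fixed point by the very diagonal manoeuvre that underlies Lawvere's theorem above.

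First I would unpack the definition of effective operation. To say that $F : \mathcal{PR} \rightarrow \mathcal{PR}$ is effective is to say that its action is tracked by a recursive function on codes: there is a recursive $h$ with $h(e)\downarrow$ and $F(\phi_e) = \phi_{h(e)}$ for every index $e$. Since $F$ is defined on \emph{all} of $\mathcal{PR}$, the tracker $h$ is defined at every index, hence \emph{total} recursive. This totality is the hinge of the argument, for it reduces the problem to a purely syntactic one: a fixed point of $F$ is exactly a function $\phi_n$ for which $\phi_n = \phi_{h(n)}$.

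Second, I would manufacture such an index $n$ by diagonalisation, using the universal function and the s-m-n theorem. One obtains a total recursive $d$ with $\phi_{d(e)} \simeq \phi_{\phi_e(e)}$ whenever $\phi_e(e)\downarrow$; choosing an index $v$ for the total recursive map $h \circ d$ and setting $n \myeq d(v)$, a short computation gives $\phi_n = \phi_{d(v)} = \phi_{\phi_v(v)} = \phi_{h(d(v))} = \phi_{h(n)}$, whence $F(\phi_n) = \phi_{h(n)} = \phi_n$. This is precisely Kleene's Second Recursion Theorem applied to $h$, and the reader may simply cite it; I would nevertheless display the diagonal, since it is the computability-theoretic shadow of the Lawvere argument — the self-application $e \mapsto \phi_e(e)$ plays the role of $\langle id_A, id_A \rangle$, the universal function that of the weak-point surjection $r$, and $h$ that of the endomorphism $t$.

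The diagonalisation is routine once $h$ is in hand; the only genuine subtlety lies in the first step, namely securing a recursive tracker that is \emph{total}. This rests on the defining clause of ``effective operation'' together with the fact that $F$ is defined everywhere, and it is the point at which the extensional datum $F$ is replaced by the intensional datum $h$ — the same move that the whole paper isolates through exposures. I would also flag, for honesty, that the statement proved here yields merely \emph{a} fixed point; the classical First Recursion Theorem for recursive operators additionally guarantees a \emph{least} one, but only the weaker conclusion is needed to exhibit the parallel with Lawvere's extensional fixed-point theorem.
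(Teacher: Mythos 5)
Your proof is correct, and at bottom it runs the same diagonal as the paper, but the decomposition is genuinely different. The paper unpacks ``effective operation'' as the existence of a \emph{binary} partial recursive $f : \mathbb{N} \times \mathbb{N} \parfunc \mathbb{N}$ with $f(e,x) \simeq F(\phi_e)(x)$, and then inlines Kleene's diagonal directly on $f$: it takes a code $d$ for $\phi_d(y,x) \myeq f(S(y,y),x)$ and verifies $\phi_{S(d,d)}(x) \simeq f(S(d,d),x) \simeq F(\phi_{S(d,d)})(x)$, using a single s-1-1 application and no universal function. You instead unpack the definition as a \emph{total} recursive index-tracker $h$ with $F(\phi_e) = \phi_{h(e)}$ --- equivalent to the paper's $f$ (one s-m-n application gets $h$ from $f$; the universal function gets $f$ from $h$), so there is no gap, though a sentence acknowledging the translation would be in order --- and then apply Rogers' form of the recursion theorem: $\phi_{d(e)} = \phi_{\phi_e(e)}$, $v$ a code of the total function $h \circ d$, $n \myeq d(v)$. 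Your computation is sound, and you correctly note where totality is used ($\phi_v = h \circ d$ total, so $\phi_v(v)\downarrow$ and the chain $\phi_{d(v)} = \phi_{\phi_v(v)} = \phi_{h(d(v))}$ goes through). The trade-off is chiefly expository: the paper deliberately does \emph{not} cite the SRT inside this proof --- it proves the FRT by the bare diagonal and only afterwards observes that the argument never used the extensionality of $f$, which is exactly how it motivates the SRT as the non-extensional generalisation mirrored by its exposure-theoretic theorem for arrows $t : QY \rightarrow Y$. Taking the SRT as your key lemma inverts that narrative, though since you display the diagonal anyway, the content is recovered; and your two closing caveats are well placed --- the passage from the extensional datum $F$ to the intensional datum $h$ is indeed the conceptual hinge the paper isolates via exposures, and your remark that only \emph{a} fixed point (not a least one) is obtained matches the paper's own footnote in \S\ref{sec:krtcat}.
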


\begin{proof}
  That $F : \mathcal{PR} \rightarrow \mathcal{PR}$ is an effective
  operation means that there is a partial recursive $f :
  \mathbb{N} \times \mathbb{N} \parfunc{} \mathbb{N}$ such that
  $f(e, x) \simeq F(\phi_e)(x)$. Let $d \in \mathbb{N}$ a code for
  the partial recursive function $\phi_d(y, x) \myeq f(S(y, y),
  x)$, where $S : \mathbb{N} \times \mathbb{N} \parfunc{}
  \mathbb{N}$ is the s-1-1 function of the s-m-n theorem. Then, by
  the s-m-n theorem, and the definitions of $d \in \mathbb{N}$ and
  $f$, \[
    \phi_{S(d, d)}(x)
      \simeq \phi_d(d, x)
      \simeq f(S(d, d), x)
      \simeq F(\phi_{S(d,d)})(x)
  \] so that $\phi_{S(d, d)}$ is a fixed point of $F :
  \mathcal{PR} \rightarrow \mathcal{PR}$.
\end{proof}

Lawvere's theorem is virtually identical to a point-free version
of this proof. Yet, one cannot avoid noticing that we have proved
more than that for which we bargained. The $f : \mathbb{N} \times
\mathbb{N} \parfunc{} \mathbb{N}$ in the proof above has the
special property that it is \emph{extensional}, in the sense that
\[
  \phi_e = \phi_{e'} \quad\Longrightarrow\quad
    \forall x \in \mathbb{N}.\ f(e, x) = f(e', x)
\] However, the step which yields the fixed point argument holds
for any such $f$, not just the extensional ones. This fact
predates the FRT, and was shown by Kleene in 1938
\cite{Kleene1938}.

\begin{theorem}[Second Recursion Theorem]
  For any partial recursive $f: \mathbb{N} \times \mathbb{N}
  \parfunc \mathbb{N}$, there exists $e \in \mathbb{N}$ such that
  $\phi_e(y) \simeq f(e, y)$ for all $y \in \mathbb{N}$.
\end{theorem}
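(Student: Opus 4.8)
The plan is to reprise the diagonal argument from the proof of the First Recursion Theorem almost verbatim, the crucial difference being that we now make no use of any extensionality hypothesis on $f$. Indeed, the FRT proof invoked the assumption $f(e, x) \simeq F(\phi_e)(x)$ only at the very last step, in order to \emph{identify} the fixed point; the self-referential construction of the index itself went through for an arbitrary partial recursive $f$. So the strategy is to strip away the effective-operation layer and keep only the skeleton of the s-m-n diagonalisation, which is precisely the observation flagged in the discussion preceding the statement.

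Concretely, first I would form the auxiliary partial recursive function $g : \mathbb{N} \times \mathbb{N} \parfunc \mathbb{N}$ given by $g(u, y) \simeq f(S(u, u), y)$, where $S : \mathbb{N} \times \mathbb{N} \parfunc \mathbb{N}$ is the (total recursive) s-1-1 function of the s-m-n theorem. Since $f$ is partial recursive and $S$ is total recursive, the composite $g$ is partial recursive, and so has an index: let $d \in \mathbb{N}$ be a code for $g$, that is, $\phi_d(u, y) \simeq f(S(u, u), y)$.

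The fixed point is then $e \myeq S(d, d)$. I would verify the defining equation by the chain $\phi_e(y) = \phi_{S(d, d)}(y) \simeq \phi_d(d, y) \simeq f(S(d, d), y) = f(e, y)$, where the first Kleene equality is the defining property of the s-1-1 function $S$ applied to the code $d$ at the datum $d$, and the second is the definition of $d$ with its program slot $u$ instantiated to $d$. Hence $\phi_e(y) \simeq f(e, y)$ for all $y$, as required.

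There is no genuine obstacle here beyond arranging the self-application $S(d, d)$ correctly. The only subtlety worth flagging is that $d$ must be chosen as an index for the \emph{binary} function $g$, so that the s-m-n theorem can collapse the first argument, and that the same code $d$ is supplied to $S$ twice---once as the program to be specialised and once as the datum substituted into it. This is exactly the `$S(y, y)$' self-application already appearing in the FRT proof; the present theorem simply records that the argument never depended on $f$ factoring through $\phi_e \mapsto F(\phi_e)$, which is precisely the extensionality condition isolated above.
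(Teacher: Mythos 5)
Your proof is correct and is essentially the argument the paper itself intends: it reuses the s-m-n diagonalisation $d \mapsto S(d,d)$ from the FRT proof, which the paper explicitly notes ``holds for any such $f$, not just the extensional ones.'' Your verification chain $\phi_{S(d,d)}(y) \simeq \phi_d(d,y) \simeq f(S(d,d),y)$ matches the paper's computation exactly.
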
 

This is significantly more powerful than the FRT, as $f(e, y)$ can
make arbitrary decisions depending on the source code $e$,
irrespective of the function $\phi_e$ of which it is the source
code. Moreover, it is evident that the function $\phi_e$ has
\emph{access to its own code}, allowing for a certain degree of
reflection. Even if $f$ is extensional, hence defining an
effective operation, the SRT grants us more power than the FRT:
for example, before recursively calling $e$ on some points, $f(e,
y)$ could `optimise' $e$ depending on what $y$ is, hence ensuring
that the recursive call will run faster than $e$ itself would.
This line of thought is common in the \emph{partial evaluation}
community, see e.g. \cite{Jones1996}.

In the sequel we argue that our fixed point theorem involving
exposures is a generalisation of Lawvere's theorem, in the exact
same way that the SRT is a non-extensional generalisation of the
FRT. In order to do so, we define a P-category and an exposure
based on realizability theory, and claim that the FRT and the SRT
are instances of the general theorems in that particular
P-category.


\section{An Exposure on Assemblies}
  \label{sec:asm}

Our second example of an exposure will come from realizability,
where the basic objects are assemblies. An assembly is a set to
every element of which we have associated a set of
\emph{realizers}. The elements of the set can be understood as
\emph{elements of a datatype}, and the set of realizers of each
such element as the \emph{machine-level representations} of it.
For example, if realizers range in the natural numbers, then
assemblies and functions between them which are partial recursive
on the level of realizers yield a category where `everything is
computable.'

In practice, the generalisation from natural numbers to an
arbitrary \emph{partial combinatory algebra (PCA)} is made. A PCA
is an arbitrary, untyped `universe' corresponding to some notion
of computability or realizability. There are easy tricks with
which one may encode various common `first-order' datatypes, such
as booleans, integers, etc. as well as all partial recursive
functions (up to the encoding of integers). These methods can be
found \cite{Beeson1985,Longley1995,Longley1997,vanOosten2008}.

\begin{definition} A \emph{partial combinatory algebra (PCA)} $(A,
  \cdot)$ consists of a set $A$, its carrier, and a partial binary
  operation $\cdot : A \times A \parfunc A$ such that there exist
  $\mb{K}, \mb{S} \in A$ with the properties that \[
    \mb{K} \cdot x\downarrow, 
      \quad
    \mb{K} \cdot x \cdot y \simeq y,
      \quad
    \mb{S} \cdot x \cdot y\downarrow,
      \quad
    \mb{S} \cdot x \cdot y \cdot z \simeq x \cdot z \cdot (y \cdot z)
  \] for all $x, y, z \in A$.
\end{definition}

The simplest example of a PCA, corresponding to classical
computability, is $K_1$, also known as \emph{Kleene's first
model}. Its carrier is $\mathbb{N}$, and $r \cdot a \myeq
\phi_r(a)$.

\begin{definition} An \emph{assembly} $X$ on a PCA $A$ consists of a set
$\left\lvert X \right\rvert$ and, for each $x \in \left\lvert X
\right\rvert$, a non-empty subset $\left\|x\right\|_X$ of $A$.  If
$a \in \left\|x\right\|_X$, we say that \emph{$a$ realizes $x$}.
\end{definition}

\begin{definition} For two assemblies $X$ and $Y$, a function $f :
\bars{X} \rightarrow \bars{Y}$ is said to be \emph{tracked by $r
\in A$} just if, for all $x \in \bars{X}$ and $a
\in \dbars{x}_X$, we have $r \cdot a \downarrow$ and $r \cdot a
\in \dbars{f(x)}_Y$ \end{definition}

Now: for each PCA $A$, we can define a category $\mathbf{Asm}(A)$,
with objects all assemblies $X$ on $A$, and morphisms $f : X
\rightarrow Y$ all functions $f : \bars{X} \rightarrow \bars{Y}$
that are tracked by some $r \in A$.
\begin{theorem}
  Assemblies and `trackable' morphisms between them form a
  category $\mathbf{Asm}(A)$ that is cartesian closed, has
  finite coproducts, and a natural numbers object.
\end{theorem}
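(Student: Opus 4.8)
The plan is to construct each piece of the cartesian closed structure explicitly at the level of underlying sets, and then exhibit a PCA element that tracks every required morphism, using the combinatory completeness of the PCA (the existence of $\mb{K}$ and $\mb{S}$, which gives us $\lambda$-abstraction internally). The guiding principle throughout is that the set-theoretic constructions are the obvious ones from $\mathbf{Set}$, and the real content is always to produce a realizer.

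First I would define the structure object by object. For the terminal object $\mathbf{1}$, take a one-element set realized by, say, $\{\mb{K}\}$; the unique map into it is tracked by the constant combinator $\mb{K}$. For the product $X \times Y$, take $\bars{X} \times \bars{Y}$ with $\dbars{(x,y)}$ the set of realizers $\langle a, b \rangle$ coding a pair $a \in \dbars{x}_X$, $b \in \dbars{y}_Y$; here I would first fix a standard pairing combinator and its two projections inside the PCA, so that projections and pairing of tracked maps are themselves tracked. For the exponential $Y^X$, take $\bars{Y^X}$ to be the set of morphisms $f : X \rightarrow Y$ in $\mathbf{Asm}(A)$, with $\dbars{f}$ the set of elements of $A$ that track $f$; these realizer sets are non-empty exactly because the morphisms are by definition trackable. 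Evaluation $\mathrm{ev} : Y^X \times X \rightarrow Y$ is tracked by an element that, given a code for $f$ paired with a realizer for $x$, applies the former to the latter. For coproducts $X + Y$, take the disjoint union with realizers tagged by a left/right marker (using $\mb{K}$ and its dual as tags), and for the natural numbers object take $\mathbb{N}$ with each $n$ realized by (a code for) its Curry numeral.

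Next I would verify the universal properties, in each case producing the mediating realizer. The key recurring move is currying: given $g : Z \times X \rightarrow Y$ tracked by some $s \in A$, its transpose $\lambda(g) : Z \rightarrow Y^X$ must send $z$ to the morphism $x \mapsto g(z,x)$, and I must show this transpose is itself tracked. This is where combinatory completeness does the work: from a realizer for $z$ I can, using $\mb{S}$ and $\mb{K}$, build a realizer that waits for a realizer of $x$ and then simulates $s$ on the pair. I would carry out the analogous computations for the mediating map into a product and for the copairing out of a coproduct, and for the NNO I would check that primitive recursion is realized, which again reduces to the standard fact that PCAs compute all partial recursive functions.

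The main obstacle is entirely concentrated in the exponential, and specifically in showing that currying preserves trackability. All the other constructions are routine once the pairing, projection, and tagging combinators are fixed, since they only require composing realizers. The delicate point is that the realizers of a function in $\bars{Y^X}$ are not canonical, and I must confirm that the definition of the realizing relation for $Y^X$ is well-posed (every morphism has at least one tracker, and any tracker suffices) and that the transpose operation lands inside the tracked morphisms uniformly in $z$. In practice this amounts to one explicit $\lambda$-term in the combinatory calculus of $A$; I expect to invoke the standard abstraction-elimination procedure (bracket abstraction) rather than write the $\mb{S}$-$\mb{K}$ expression by hand. Once this single realizer is produced, the commuting and uniqueness conditions hold up to extensional equality of functions on underlying sets, and the theorem follows.
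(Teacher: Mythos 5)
The paper offers no proof of this theorem at all---it is stated as a standard fact of realizability, deferred to the cited literature (Beeson, Longley--Simpson, van Oosten)---and your proposal is precisely that standard argument: the evident constructions on underlying sets inherited from $\mathbf{Set}$, with trackers manufactured by bracket abstraction from $\mb{S}$ and $\mb{K}$, and with the one genuinely non-routine point (that currying of a tracked map $Z \times X \rightarrow Y$ yields a tracked map $Z \rightarrow Y^X$, uniformly in the realizer of $z$) correctly isolated and correctly discharged; uniqueness of the mediating maps is indeed free, since morphisms in $\mathbf{Asm}(A)$ are equal exactly when their underlying functions are. One cosmetic slip worth fixing: $\mb{K}$ itself need not track the unique map $X \rightarrow \mathbf{1}$ when $\dbars{*}_{\mathbf{1}} = \{\mb{K}\}$ (you want a constant tracker such as $\mb{K} \cdot k$ for a chosen $k \in \dbars{*}_{\mathbf{1}}$, or simply take $\dbars{*}_{\mathbf{1}} = A$), but this does not affect the argument.
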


We only mention one other construction that we shall need.
Given an assembly $X$, the \emph{lifted assembly} $X_\bot$ is
defined to be \[
  \bars{X_\bot} \myeq \bars{X} \cup \{ \bot \}
    \quad \text{and} \quad
  \dbars{x}_{X_\bot} \myeq 
    \begin{cases}
      \setcomp{ r }{ r \cdot \overline{0}\downarrow \text{
	and } r \cdot \overline{0} \in
	\dbars{x}_X }
	  &\text{for } x \in \bars{X} \\
      \setcomp{ r }{r \cdot \overline{0}\uparrow}
	  &\text{for } x = \bot
    \end{cases}
\] for some chosen element of the PCA $\overline{0}$.
Elements of $X_\bot$ are either elements of $X$, or the undefined
value $\bot$. Realizers of $x \in \bars{X}$ are `computations' $r
\in A$ which, when run (i.e. given the dummy value $\overline{0}$
as argument) return a realizer of $x$. A computation that does not
halt when run represents the undefined value.\footnote{Bear in
mind that this definition of the lifted assembly does not work if
the PCA is total. We are mostly interested in the decidedly
non-total PCA $K_1$, so this is not an issue. There are other,
more involved ways of defining the lifted assembly; see
\cite{Longley1997} in particular.}

\subsection{Passing to a P-category}

The lack of intensionality in the category $\mathbf{Asm}(A)$ is
blatantly obvious. To elevate a function $f : \bars{X} \rightarrow
\bars{Y}$ to a morphism $f : X \rightarrow Y$, we only require
that \emph{there exists} a `witness' $r \in A$ that realizes it,
and then we forget about this witness entirely. To mend this, we
define a P-category.

The P-category $\mathfrak{Asm}(A)$ of assemblies on $A$ is
defined to have all assemblies $X$ on $A$ as objects, and pairs 
$(f : \bars{X} \rightarrow \bars{Y}, r \in A)$ where $r$ tracks
$f$ as arrows. We define $(f, r) \sim (g, s)$ just if $f = g$,
i.e. when the underlying function is the same. The composition of
$(f, p) : X \rightarrow Y$ and $(g, q) : Y \rightarrow Z$ is $(g
\circ f, \mathbf{B} \cdot q \cdot p)$ where $\mathbf{B}$ is a
combinator in the PCA such that $\mathbf{B} \cdot f \cdot g \cdot
x \simeq f \cdot (g \cdot x)$ for any $f, g, x \in A$. The
\emph{identity} $id_X : X \rightarrow X$ is defined to be
$(id_{\bars{X}}, \mathbf{I}) : X \rightarrow X$, where
$\mathbf{I}$ is a combinator in the PCA such that $\mathbf{I}
\cdot x \simeq x$ for all $x \in A$.

Much in the same way as before---but now up to the PER $\sim$---we
can show \begin{theorem}
  $\mathfrak{Asm}(A)$ is a cartesian closed P-category with a
  natural numbers object $\mathbb{N}$.
\end{theorem}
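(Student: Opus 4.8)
The plan is to lift each piece of the cartesian closed structure of $\mathbf{Asm}(A)$ --- terminal object, binary products, exponentials, and the natural numbers object --- to the P-categorical setting, carrying the realizers along explicitly and verifying every universal property \emph{up to} the PER $\sim$. The guiding observation, which makes the bulk of the argument routine, is that $\sim$ compares only the underlying functions of arrows: $(f, r) \sim (g, s)$ precisely when $f = g$. Consequently every equation that must hold up to $\sim$ reduces to an equation between set-theoretic functions, which already holds on the nose in $\mathbf{Set}$, and hence in $\mathbf{Asm}(A)$. The only genuinely realizability-theoretic obligation is to exhibit, for each structural arrow, \emph{some} element of the PCA that tracks the relevant underlying function; this is always discharged by combinatory completeness of $(A, \cdot)$.

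Concretely, I would first fix the objects, inheriting them directly from the construction for $\mathbf{Asm}(A)$: the terminal object $\mathbf{1}$; the product $X \times Y$ with $\bars{X \times Y} \myeq \bars{X} \times \bars{Y}$ and realizers formed by a pairing combinator; the exponential $Y^X$ with $\bars{Y^X}$ the set of trackable functions $\bars{X} \to \bars{Y}$ and $\dbars{h}_{Y^X}$ the set of realizers that track $h$; and the natural numbers object $\mathbb{N}$ with each numeral realized by its PCA-encoding. For the structural arrows --- projections, pairing, evaluation, currying, zero, successor, and the recursor --- I would supply trackers: projection and pairing combinators for products, an application combinator for evaluation, the combinator produced by combinatory completeness for currying, and the PCA's primitive-recursion combinator for the mediating map out of $\mathbb{N}$.

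Next I would check that these data assemble into a P-category structure. Composition and pairing must be P-functions, i.e. must respect $\sim$: since $(f, p) \sim (f', p')$ forces $f = f'$, the underlying functions of the composites and pairings agree irrespective of the chosen realizers, so $\sim$ is preserved. The same remark dispatches associativity and the identity laws up to $\sim$ --- the $\mathbf{B}$- and $\mathbf{I}$-combinator realizers need not match on the nose, but the underlying functions do, which is exactly what $\sim$ demands. For each universal property I would then verify the defining equations up to $\sim$: the product equations $\pi_1 \circ \langle f, g \rangle \sim f$ and $\langle \pi_1 \circ h, \pi_2 \circ h \rangle \sim h$, the $\beta$ and $\eta$ laws for the exponential, and the two recursion equations for $\mathbb{N}$, all of which collapse to the corresponding set-level identities and therefore hold. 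Uniqueness of mediating arrows up to $\sim$ likewise reduces to set-theoretic uniqueness.

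The main obstacle is not the equational bookkeeping --- which is trivialised by the fact that $\sim$ forgets realizers --- but the tracking lemmas for the exponential and the natural numbers object. For currying one must produce, from a tracker of $f : \bars{C} \times \bars{X} \to \bars{Y}$, a single element of $A$ that tracks the curried function $\bars{C} \to \bars{Y^X}$ uniformly in the $C$-component; this is precisely where combinatory completeness (an internal $\mathbf{S}$--$\mathbf{K}$ abstraction, the combinatory analogue of the $s$-$m$-$n$ theorem) is essential. For the recursor one must encode primitive recursion over the PCA so that the mediating arrow $\mathbb{N} \to X$ is trackable by a single realizer. Both constructions are standard for $\mathbf{Asm}(A)$; the only additional care in the P-category is to package the chosen realizer together with its underlying function as one arrow, which poses no difficulty once combinatory completeness is in hand.
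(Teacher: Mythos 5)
Your proposal is correct and takes essentially the same route the paper intends: the paper gives no explicit proof, remarking only that the result follows ``much in the same way as before'' (i.e.\ as for $\mathbf{Asm}(A)$) but now up to the PER $\sim$. That is exactly what you do --- reuse the standard terminal object, products, exponentials and NNO of assemblies, discharge the tracking obligations for the structural maps (currying, the recursor) via combinatory completeness, and observe that since $(f,r) \sim (g,s)$ iff $f = g$, every equational and uniqueness requirement up to $\sim$ collapses to the underlying set-level identity.
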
 

We can now define an exposure $\Box : \mathfrak{Asm}(A)
\expo{} \mathfrak{Asm}(A)$. For an assembly $X \in
\mathfrak{Asm}(A)$, let $\Box X$ be the assembly defined by \[
  \bars{\Box X} \myeq
    \setcomp{(x, a)}{x \in \bars{X}, a \in \dbars{x}_A},
      \quad
    \dbars{(x, a)}_{\Box X} \myeq \{\ a\ \}
\] Given $(f, r) : X \rightarrow Y$, we define $\Box(f, r) = (f_r,
r) : \Box X \rightarrow \Box Y $ where $f_r : \bars{\Box X}
\rightarrow \bars{\Box Y}$ is defined by $f_r(x, a) \myeq (f(x), r
\cdot a)$. Thus, under the exposure each element $(x, a) \in
\bars{\Box X}$ carries with it its own unique realizer $a$. The
image of $(f, r)$ under $\Box$ shows not only what $f$ does to an
element of its domain, but also how $r$ acts on the realizer of
that element.

It is long but straightforward to check that \begin{theorem} 
  $\Box : \mathfrak{Asm}(A) \expo{} \mathfrak{Asm}(A)$ is a
  cartesian, product-preserving, and comonadic endoexposure.
\end{theorem}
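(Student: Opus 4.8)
The plan is to verify the three components of the comonadic-exposure structure separately, since the claim bundles four distinct properties: that $\Box$ is (1) an exposure, (2) cartesian, (3) product-preserving, and (4) comonadic. For the exposure axioms, I would first check functoriality up to $\sim$: since $\sim$ on $\mathfrak{Asm}(A)$ ignores realizers and remembers only the underlying function, the identity and composition laws $\Box(\mathit{id}_X) \sim \mathit{id}_{\Box X}$ and $\Box(g \circ f) \sim \Box g \circ \Box f$ reduce to checking that the underlying functions $(\mathit{id}_{\bars{X}})_{\mathbf I}$ and $(g \circ f)_{r'}$ agree with $\mathit{id}_{\bars{\Box X}}$ and $(g_s) \circ (f_r)$ as set-functions on $\bars{\Box X}$; the realizer bookkeeping is then irrelevant for $\sim$, though I must separately confirm each candidate arrow is genuinely tracked (so that it is a well-defined arrow at all). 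The PER-reflection axiom (3) is the crucial one: I would show that if $\Box(f,r) \sim \Box(g,s)$, then $f_r = g_s$ as functions, and since $f_r(x,a) = (f(x), r \cdot a)$ projects onto $f(x)$ in the first coordinate, this forces $f = g$ on $\bars{X}$, hence $(f,r) \sim (g,s)$. This reflection step is where the design of $\Box X$ — attaching a realizer to each element — does the real work.

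For cartesianness and product-preservation, I would use the stated fact that $\Box$ acts component-wise on finite products in the arithmetic example as a guide, but here I must verify it directly from the assembly definitions. The plan is to compute $\Box(X \times Y)$ and $\Box X \times \Box Y$ explicitly and exhibit the canonical comparison arrow $\langle \Box\pi_1, \Box\pi_2 \rangle$ together with a tracked inverse $m_{X,Y}$, checking that the two composites are the identity \emph{functions} (so equal under $\sim$) and that both directions are tracked by suitable combinators. The terminal case $\Box\mathbf 1 \cong \mathbf 1$ should be immediate once I identify $\bars{\Box\mathbf 1}$ with the realizers of the unique point. Cartesianness — the three equations $\pi_1 \circ \langle f,g\rangle \approx f$ etc. — then follows because intensional equality $\approx$ means equality after applying $\Box$, and the underlying projection functions compose correctly; this is a routine diagram chase modulo the comparison isos $m$.

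The genuinely laborious part is \textbf{comonadicity}: I must produce an evaluator $\epsilon : \Box \natexp \mathsf{Id}$ and a quoter $\delta : \Box \natexp \Box^2$ and verify the two comonad coherence diagrams plus naturality plus the reasonable-quoting-device condition. For $\epsilon_X : \Box X \to X$ I would take the first projection $(x,a) \mapsto x$, tracked by a combinator that discards its realizer argument. For $\delta_X : \Box X \to \Box^2 X$, the natural choice sends $(x,a) \mapsto ((x,a),a)$ — duplicating the realizer — which is tracked because the realizer $a$ is literally carried in the element and can be copied. I would then check that $\delta$ is a reasonable quoting device by verifying the square relating $\delta_A \circ a$ and $\Box a \circ m_0$ for points $a : \mathbf 1 \to \Box A$, unwinding the definition of a point of $\Box A$ as a pair with a realizer. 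The main obstacle I anticipate is not conceptual but combinatorial: each of these arrows must be exhibited together with an explicit PCA combinator tracking it, and the coherence diagrams must be checked at the level of underlying functions (where $\sim$ makes them hold on the nose) while simultaneously confirming trackability at every node. The naturality squares for $\epsilon$ and $\delta$ reduce to function equalities that hold definitionally — e.g. both legs of the $\delta$-naturality square send $(x,a)$ to $((f(x), r\cdot a), r\cdot a)$ — so the real effort is the bookkeeping of realizers and the repeated appeal to $\mathbf S$, $\mathbf K$, $\mathbf B$, $\mathbf I$ to witness each tracking, precisely the kind of long-but-straightforward verification the theorem statement already flags.
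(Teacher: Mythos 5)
Your overall plan is the right one, and indeed it is essentially \emph{the} proof: the paper itself offers nothing beyond the remark that the verification is ``long but straightforward,'' so the content of a proof here is exactly what you supply --- the reflection argument via first components (note you implicitly use that realizer sets are non-empty, so every $x \in \bars{X}$ actually occurs as a first component of some $(x,a) \in \bars{\Box X}$; this is what makes $f_r = g_s$ imply $f = g$), the componentwise comparison isomorphisms for products, $\epsilon_X(x,a) = x$, and $\delta_X(x,a) = ((x,a),a)$. On the last point you could say something stronger than ``natural choice'': since $\dbars{(x,a)}_{\Box X} = \{a\}$, the carrier of $\Box^2 X$ consists precisely of the elements $((x,a),a)$, so the underlying function of any arrow $\Box X \rightarrow \Box^2 X$ lifting the evident map is \emph{forced}. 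Your identification of where the work lies (trackability bookkeeping, coherence checked at the level of underlying functions where everything holds on the nose) is accurate; the cartesianness equations reduce to the fact that the PCA pairing combinators satisfy exact beta-laws on the pair-coded realizers inhabiting $\dbars{(x,y)}_{X \times Y}$, which is why the ``routine diagram chase'' goes through.

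Two local slips deserve correction. First, your tracker for $\epsilon_X$ is wrong: you propose ``a combinator that discards its realizer argument,'' but a discarding tracker $t$ with $t \cdot a$ constant would have to produce a single element realizing \emph{every} $x \in \bars{X}$ simultaneously, which is impossible in general. The point of the construction is the opposite: the sole realizer of $(x,a)$ is $a$ itself, which already realizes $x$, so $\epsilon_X$ is tracked by $\mathbf{I}$; and the same observation shows $\mathbf{I}$ also tracks $\delta_X$ (the unique realizer of $((x,a),a)$ is again $a$ --- no copying combinator is needed, since realizers of $\Box X$ are bare elements of $A$, not coded pairs). Second, the terminal case is not quite ``immediate'': your identification $\bars{\Box\mathbf{1}} \cong \dbars{\ast}_{\mathbf{1}}$ is correct, but since $\sim$ in $\mathfrak{Asm}(A)$ is literal equality of underlying functions, the composite $\Box\mathbf{1} \rightarrow \mathbf{1} \rightarrow \Box\mathbf{1}$ is the identity \emph{function} only if $\dbars{\ast}_{\mathbf{1}}$ is a singleton. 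So you must fix the terminal object to be an assembly with exactly one realizer (any such choice remains P-terminal, so nothing is lost), and only then do $m_0$ and the reasonable-quoting-device square for $\delta$ --- which otherwise you verify correctly --- make sense.
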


\subsection{Kleene's Recursion Theorems, categorically}
  \label{sec:krtcat}

Let us concentrate on the category $\mathfrak{Asm}(K_1)$. Arrows
$\mathbb{N} \rightarrow \mathbb{N}_\bot$ are easily seen to
correspond to partial recursive functions. It is not hard to
produce a weak-point surjection $r_E : \mathbb{N} \times
\mathbb{N} \rightarrow \mathbb{N}_\bot^{\mathbb{N}}$, and hence to
invoke Lawvere's theorem to show that every arrow
$\mathbb{N}_\bot^{\mathbb{N}} \rightarrow
\mathbb{N}_\bot^{\mathbb{N}}$ has an extensional fixed point. Now,
by Longley's generalised Myhill-Shepherdson theorem
\cite{Longley1995,Longley2015}, arrows
$\mathbb{N}_\bot^{\mathbb{N}} \rightarrow
\mathbb{N}_\bot^{\mathbb{N}}$ correspond to effective operations.
Hence, in this context Lawvere's theorem corresponds to the simple
diagonal argument that we used to show the FRT.\footnote{But note
that this is not the complete story, as there is no guarantee that
the fixed point obtained in \emph{least}, which is what Kleene's
original proof in \cite{Kleene1952} gives. See also
\cite{Kavvos2016c}.}

Let us look at arrows of type $\Box (\mathbb{N}_\bot^{\mathbb{N}})
\rightarrow \mathbb{N}_\bot^{\mathbb{N}}$. These correspond to
`non-functional' transformations, mapping functions to functions,
but without respecting extensionality. As every natural number
indexes a partial recursive function, these arrows really
correspond to all partial recursive functions (up to some tagging
and encoding). It is not hard to see that $\Box \mathbb{N}$ is
P-isomorphic to $\mathbb{N}$, and that one can build a weak-point
surjection of type $\Box \mathbb{N} \times \Box \mathbb{N}
\rightarrow \mathbb{N}_\bot^{\mathbb{N}}$, so that by our theorem,
every arrow of type $\Box (\mathbb{N}_\bot^{\mathbb{N}})
\rightarrow \mathbb{N}_\bot^{\mathbb{N}}$ has an intensional fixed
point. This is exactly Kleene's SRT!

\section{Rice's theorem}
  \label{sec:rice}

To further illustrate the applicability of the language of
exposures, we state and prove an abstract version of \emph{Rice's
theorem}. Rice's theorem is a result in computability which states
that no computer can decide any non-trivial property of a program
by looking at its code. A short proof relies on the SRT.

\begin{theorem}[Rice]
  Let $\mathcal{F}$ be a non-trivial set of partial recursive
  functions, and let $A_\mathcal{F} \myeq \setcomp{ e \in
  \mathbb{N} }{ \phi_e \in \mathcal{F}}$ be the set of indices of
  functions in that set. Then $A_\mathcal{F}$ is undecidable.
\end{theorem}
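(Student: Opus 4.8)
The plan is to reduce the decidability of $A_\mathcal{F}$ to absurdity by applying the categorical SRT that was established in \S\ref{sec:lawvere} and instantiated in \S\ref{sec:asm}, exactly as the classical short proof uses Kleene's Second Recursion Theorem. First I would set up the relevant data inside $\mathfrak{Asm}(K_1)$. Since $\mathcal{F}$ is nontrivial, there exist partial recursive functions $\phi_a \in \mathcal{F}$ and $\phi_b \notin \mathcal{F}$, so I fix two such indices $a, b \in \mathbb{N}$. The assumption for contradiction is that $A_\mathcal{F}$ is decidable, i.e. that its characteristic function $\chi : \mathbb{N} \to \mathbb{N}$ is total recursive, returning (say) $\overline{1}$ on indices of functions in $\mathcal{F}$ and $\overline{0}$ otherwise. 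Because $\chi$ operates on \emph{codes}, it should be realized categorically as an arrow out of the exposed object, $\chi : \Box(\mathbb{N}_\bot^{\mathbb{N}}) \to \mathbf{2}$ — this is the crucial point, since $\chi$ is exactly the kind of non-extensional, `code-inspecting' map that $\Box$ was designed to host, and it is decidable precisely because it factors through $\Box$.

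Next I would build the diagonalizing arrow whose intensional fixed point produces the contradiction. Using the branching $\chi$ together with the two reference functions $\phi_a, \phi_b$, I define an arrow $t : \Box(\mathbb{N}_\bot^{\mathbb{N}}) \to \mathbb{N}_\bot^{\mathbb{N}}$ which, given (the code of) a function $g$, inspects $\chi(g)$ and returns $\phi_b$ when $g \in \mathcal{F}$ and $\phi_a$ when $g \notin \mathcal{F}$ — that is, $t$ always flips membership status. Since every arrow of type $\Box(\mathbb{N}_\bot^{\mathbb{N}}) \to \mathbb{N}_\bot^{\mathbb{N}}$ has an IFP by the result of \S\ref{sec:asm}, there is a point $y : \mathbf{1} \to \mathbb{N}_\bot^{\mathbb{N}}$ with $y \sim t \circ Q(y) \circ m_0$, i.e. a partial recursive function that is extensionally equal to $t$ applied to its own code. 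By the defining property of $t$, the function $y$ lies in $\mathcal{F}$ if and only if $\chi$ reports its code as \emph{not} in $\mathcal{F}$, and vice versa; unwinding $\sim$ as extensional (functional) equality in $\mathfrak{Asm}(K_1)$ yields the contradiction $\phi_{\ulcorner y \urcorner} \in \mathcal{F} \iff \phi_{\ulcorner y \urcorner} \notin \mathcal{F}$.

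The main obstacle I anticipate is not the diagonal combinatorics — which is standard — but the bookkeeping needed to make the definition of $t$ a \emph{bona fide} arrow of the P-category, and in particular to verify that $\chi$ really lives over $\Box$ rather than over the bare object. The decidability hypothesis must be translated as the trackability of a function on $\bars{\Box(\mathbb{N}_\bot^{\mathbb{N}})}$, whose elements carry their own realizers (codes); only then does the case split on $\chi$ become realizable, and only then does the IFP argument apply verbatim. I expect the cleanest route is to exhibit $t$ concretely as tracked by a combinator in $K_1$ that decodes the realizer, runs $\chi$ on it, and dispatches to $a$ or $b$, thereby discharging well-definedness and simultaneously making transparent why the classical proof needed the full strength of the SRT rather than the FRT: the map $t$ is genuinely intensional.
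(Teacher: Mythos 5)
Your proposal is correct, and its combinatorial core is exactly the paper's: fix $a$ with $\phi_a \in \mathcal{F}$ and $b$ with $\phi_b \notin \mathcal{F}$, build the membership-flipping map from the assumed decision procedure, diagonalize with a recursion theorem, contradiction. But you take a genuinely different route to the diagonalization: the paper's own proof of this statement is purely classical --- it defines $f(e,x) \simeq \textbf{if } e \in A_\mathcal{F} \textbf{ then } \phi_b(x) \textbf{ else } \phi_a(x)$, appeals to Church's thesis, and invokes Kleene's SRT directly on indices --- whereas you reconstruct the argument inside $\mathfrak{Asm}(K_1)$, packaging the decision procedure as an arrow $\chi : \Box(\mathbb{N}_\bot^{\mathbb{N}}) \rightarrow \mathbf{2}$, the flip as $t : \Box(\mathbb{N}_\bot^{\mathbb{N}}) \rightarrow \mathbb{N}_\bot^{\mathbb{N}}$, and invoking the IFP theorem of \S\ref{sec:lawvere} as instantiated in \S\ref{sec:krtcat} in place of the SRT. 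This is legitimate, since the paper asserts that instantiation \emph{is} the SRT, and your placement of $\chi$ on $\Box A$ rather than $A$ is well judged: it is what makes the code-inspecting map trackable, and it correctly isolates why the FRT would not suffice. What the paper's route buys is brevity and self-containment; what yours buys is a demonstration of the framework, and in fact your categorical rendering is \emph{not} the one the paper itself later abstracts --- the paper's abstract Rice theorem in \S\ref{sec:rice} instead takes $f : A \rightarrow \mathbf{2}$ on the bare object, assumes $A$ has EFPs (derived from IFPs plus an evaluator via the lemma of \S\ref{sec:consistency}), and uses the weak-coproduct structure $[b,a]$ on $\mathbf{2}$, so your version with $\chi$ on the exposed object and IFPs used directly is arguably a more faithful abstraction of ``deciding a property of code.'' One caveat you flag but do not discharge: transporting decidability of $A_\mathcal{F}$ from the standard numbering to realizer-codes of the exponential requires a total recursive translation $T$ with $\phi_{T(r)}$ the function realized by $r$ (so that $\chi$ applied to the realizer carried by $\Box$ really reports membership of the function $y$ itself); this is routine Longley-style bookkeeping, not a gap, but a complete write-up would need it.
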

\begin{proof} Suppose $A_\mathcal{F}$ is decidable. The fact
$\mathcal{F}$ is non-trivial means that there is some $a \in
\mathbb{N}$ such that $\phi_a \in \mathcal{F}$ and some $b \in
\mathbb{N}$ such that $\phi_b \not\in\mathcal{F}$. Consequently, $a
\in A_\mathcal{F}$ and $b \not\in A_\mathcal{F}$.

Define $f(e, x) \simeq \textbf{if } e \in A_\mathcal{F} \textbf{
then } \phi_b(x) \textbf{ else } \phi_a(x)$.  By Church's thesis,
$f : \mathbb{N} \times \mathbb{N} \rightarrow \mathbb{N}$ is
partial recursive. Use the SRT to obtain $e \in \mathbb{N}$ such
that $\phi_e(x) \simeq f(e, x)$.  Now, either $e \in
A_\mathcal{F}$ or not. If it is, $\phi_e(x) \simeq f(e, x) \simeq
\phi_b(x)$, so that $\phi_e \not\in \mathcal{F}$, a contradiction.
A similar phenomenon occurs if $e \not\in A_\mathcal{F}$.
\end{proof}

Constructing the function $f$ in the proof required three basic
elements: (a) the ability to evaluate either $\phi_a$ or $\phi_b$
given $a$ and $b$; (b) the ability to decide which one to use
depending on the input; and (c) intensional recursion. For (a), we
shall need evaluators, for (b) we shall need that the truth object
$\mathbf{2}$ is a \emph{weak coproduct} of two copies of
$\mathbf{1}$, and for (c) we shall require IFPs.

\begin{theorem}
  Let $\mathbf{2}$ is a simply consistent `truth object' which
  also happens to be a a weak coproduct of two copies of
  $\mathbf{1}$, with injections $\top : \mathbf{1} \rightarrow
  \mathbf{2}$ and $\bot : \mathbf{1} \rightarrow \mathbf{2}$.
  Furthermore, suppose that $A$ has EFPs. If $f : A \rightarrow
  \mathbf{2}$ is such that for all $x : \mathbf{1} \rightarrow A$,
  either $f \circ x \sim \top$ or $f \circ x \sim \bot$, then
  $f$ is trivial, in the sense that either $f \circ x \sim \top$
  for all $x : \mathbf{1} \rightarrow A$, or $f \circ x \sim
  \bot$ for all $x : \mathbf{1} \rightarrow A$.
\end{theorem}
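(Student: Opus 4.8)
The plan is to mirror the concrete proof of Rice's theorem, replacing the appeal to the recursion theorem by the hypothesis that $A$ has EFPs, and replacing the \emph{if-then-else} on indices by the copairing supplied by the weak coproduct structure on $\mathbf{2}$. (Note that EFPs, rather than IFPs, are what the argument actually consumes; in the intended model these would be obtained from IFPs together with an evaluator via the earlier lemma.) I would argue by contradiction, so suppose $f$ is \emph{not} trivial. Since by hypothesis every point $x : \mathbf{1} \rightarrow A$ satisfies $f \circ x \sim \top$ or $f \circ x \sim \bot$, non-triviality means that both truth values are actually attained: there are points $a_\top, a_\bot : \mathbf{1} \rightarrow A$ with $f \circ a_\top \sim \top$ and $f \circ a_\bot \sim \bot$. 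Here simple consistency $\top \not\sim \bot$ guarantees that the two alternatives of the dichotomy are mutually exclusive, so that ``not every point maps to $\top$'' really does produce a point mapping to $\bot$.

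Next I would build the ``flip'' endomorphism of $A$ that plays the role of the diagonal function in the classical argument. Using the weak coproduct structure, form the copairing $[a_\bot, a_\top] : \mathbf{2} \rightarrow A$, i.e. the mediating arrow with $[a_\bot, a_\top] \circ \top \sim a_\bot$ and $[a_\bot, a_\top] \circ \bot \sim a_\top$ --- note the deliberate swap. Set $t \myeq [a_\bot, a_\top] \circ f : A \rightarrow A$. Intuitively $t$ reads off the decision that $f$ makes at a point and then returns a witness whose decision is the \emph{opposite}. Since only the existence of the copairing is used, a weak coproduct suffices.

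Because $A$ has EFPs, the arrow $t$ has an EFP: a point $y : \mathbf{1} \rightarrow A$ with $t \circ y \sim y$, that is $[a_\bot, a_\top] \circ f \circ y \sim y$. Now apply the dichotomy to $f \circ y$. If $f \circ y \sim \top$, then chasing equations up to $\sim$ gives $y \sim [a_\bot, a_\top] \circ \top \sim a_\bot$, whence $f \circ y \sim f \circ a_\bot \sim \bot$, so $\top \sim \bot$. Symmetrically, if $f \circ y \sim \bot$ then $y \sim a_\top$ and $f \circ y \sim \top$, again forcing $\top \sim \bot$. Either branch contradicts simple consistency, so the points $a_\top$ and $a_\bot$ cannot coexist; hence $f$ is trivial.

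I expect the only real obstacle to be getting the orientation of the copairing right: the swap ($\top \mapsto a_\bot$, $\bot \mapsto a_\top$) is exactly what turns the extensional fixed point into a contradiction, and it is the algebraic residue of the self-referential ``do the opposite of what is predicted'' step that intensional recursion performs in the usual SRT-based proof. Everything else is routine equational reasoning, using only associativity of composition up to $\sim$ and the fact that composition respects $\sim$ in a P-category.
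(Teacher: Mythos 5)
Your proof is correct and follows essentially the same route as the paper: the copairing with deliberately swapped injections composed with $f$ (your $[a_\bot, a_\top] \circ f$ is exactly the paper's $g \myeq [b,a] \circ f$), an EFP of that arrow, and a two-case analysis each branch of which forces $\top \sim \bot$, contradicting simple consistency. Your added remarks on why non-triviality yields both witnesses and on the orientation of the swap are sound but do not change the argument.
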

\begin{proof}
  Suppose there are two such distinct $a, b : \mathbf{1}
  \rightarrow A$ such that $f \circ a \sim \top$ and $f \circ b
  \sim \bot$. Let $g \myeq [b, a] \circ f$ and let $y : \mathbf{1}
  \rightarrow A$ be its EFP. Now, either $f \circ y \sim \top$ or
  $f \circ y \sim \bot$. In the first case, we can calculate that
  $\top \sim\ f \circ [b, a] \circ f \circ y \sim\ f \circ [b, a]
  \circ \top \sim\ f \circ b \sim\ \bot$ so that $\mathbf{2}$ is
  \emph{not simply consistent}. A similar situation occurs if $f
  \circ y \sim \bot$.
\end{proof}

Needless to say that the premises of this theorem are easily
satisfied in our exposure on assemblies from \S\ref{sec:asm}
if we take $A = {\mathbb{N}_\bot}^\mathbb{N}$ and $\mathbf{2}$ to
be the lifted coproduct $(\mathbf{1} + \mathbf{1})_\bot$.

\section{Conclusion}

We have modelled intensionality with P-categories, and introduced
a new construct that abstractly corresponds to G\"odel numbers.
This led us to an immediate unification of many `diagonal
arguments' in logic and computability, as well as a new
perspective on the notion of \emph{intensional recursion}. Our
approach is clearer and more systematic than the one in
\cite{Lawvere2006}.

Many questions are left open. We are currently working on the
medium-term goal of a safe, reflective programming language based
on modal type theory. The basics are there, but there are many
questions: what operations should be available at modal
types; with how much expressivity would the language be endowed
for each possible set; and what are the applications?

On the more technical side, it is interesting to note that we have
refrained from a categorical proof of the diagonal lemma for
\textsf{PA}. All our attempts were inelegant, and we believe that
this is because arithmetic is fundamentally untyped:
$Q(\mathbf{2})$ has many more points than `all G\"odel numbers of
predicates.' In contrast, our approach using exposures is typed,
which sets it apart from all previous attempts at capturing such
arguments categorically, including the very elegant work of
Cockett and Hofstra \cite{Cockett2008,Cockett2010}. The approach
in \emph{op.  cit.} is based on \emph{Turing categories}, in
which every object is a retract of some very special objects---the
\emph{Turing objects}. In the conclusion of \cite{Cockett2008}
this is explicitly mentioned as an `inherent limitation.' Only
time will tell which approach is more encompassing.

Finally, it would be interesting to study the meaning of exposure
in examples not originating in logic and computability, but in
other parts of mathematics. Can we find examples of exposures
elsewhere? Are they of any use?

\section*{Acknowledgements}

I would like to thank my doctoral supervisor, Samson Abramsky, for
suggesting the topic of this paper, and for his help in
understanding the issues around intensionality and intensional
recursion.

\bibliographystyle{plain}



\end{document}